\newtheorem{theorem}{Theorem} 
\newtheorem{lemma}[theorem]{Lemma} 
\newtheorem{corollary}[theorem]{Corollary} 
\newtheorem{proposition}[theorem]{Proposition} 
\theoremstyle{definition} 
\newtheorem{remark}[theorem]{Remark} 
\newtheorem{definition}[theorem]{Definition}
\newtheorem*{proposition*}{Proposition} 
\newtheorem*{lemma*}{Lemma} 
\newtheorem*{theorem*}{Theorem} 
\DeclareMathOperator{\Z}{\mathbb Z}
\DeclareMathOperator{\N}{\mathbb N}
\DeclareMathOperator{\Q}{\mathbb Q}
\DeclareMathOperator{\supp}{Supp}
\DeclareMathOperator{\vcsp}{VCSP}
\DeclareMathOperator{\dom}{dom}
\DeclareMathOperator{\g}{\Gamma}
\DeclareMathOperator{\ar}{ar}
\DeclareMathOperator{\QQ}{\mathbb Q\cup\{+\infty\}}
\DeclareMathOperator{\blp}{BLP}
\DeclareMathOperator{\aff}{AIP}
\DeclareMathOperator{\avg}{avg}
\DeclareMathOperator{\pvcsp}{PVCSP}
\DeclareMathOperator{\wma}{wMA}
\def\multiset#1#2{\big(\kern-.2em\big(\genfrac{}{}{0pt}{}{#1}{#2}\big)\kern-.2em\big)}
\def\mmultiset#1#2{\left(\kern-.3em\left(\genfrac{}{}{0pt}{}{#1}{#2}\right)\kern-.3em\right)}
\theoremstyle{definition}
\newenvironment{example}
{\pushQED{\qed}\examplex}
{\popQED\endexamplex}
\begin{document}

\title{The combined basic LP and affine IP relaxation\\ for promise VCSPs on
infinite domains\thanks{An extended abstract of this work appeared in the
\emph{Proceedings of the 45th International Symposium on Mathematical Foundations of Computer Science
(MFCS'20)}~\cite{vz20:mfcs}. Stanislav \v{Z}ivn\'y was supported by a Royal Society University Research Fellowship. This project has received funding from the European Research Council (ERC) under the European Union's Horizon 2020 research and innovation programme (grant agreement No 714532). The paper reflects only the authors' views and not the views of the ERC or the European Commission. The European Union is not liable for any use that may be made of the information contained therein.}}

\author{
	Caterina Viola\\
	University of Oxford, UK\\
	\texttt{caterina.viola@cs.ox.ac.uk}
	\and
	Stanislav \v{Z}ivn\'{y}\\
	University of Oxford, UK\\
	\texttt{standa.zivny@cs.ox.ac.uk}
}

\date{}
\maketitle

\begin{abstract}

 Convex relaxations have been instrumental in solvability of constraint
  satisfaction problems (CSPs), as well as in the three different
  generalisations of CSPs: valued CSPs, infinite-domain CSPs, and most recently
  promise CSPs. In this work, we extend an existing tractability result to the
  three generalisations of CSPs combined: We give a sufficient condition for
  the combined basic linear programming and affine integer programming
  relaxation for exact solvability of promise valued CSPs over infinite-domains.
  This extends a result of Brakensiek and Guruswami~[SODA'20] for promise
  (non-valued) CSPs (on finite domains).

\end{abstract}

\section{Introduction}

\paragraph{Constraint satisfaction} Constraint satisfaction problems
(CSPs) are a wide class of computational decision problems. An instance of a CSP
is defined by finitely many relations (constraints) that must hold among
finitely many given variables; the computational task is to decide whether it is
possible to find an assignment of labels from a fixed set (the domain) to the
variables so that all the constraints are satisfied. Many problems in computer
science (e.g., from artificial intelligence, scheduling, computational
linguistic, computational biology and verification) can be modelled as CSPs by
choosing an appropriate set of constraints. However, there are many other
problems in which some of the constraints may be violated at a cost or in which
there are satisfying assignments which are preferable to others. These
situations are captured by valued constraint satisfaction problems. 

\paragraph{Valued constraint satisfaction} An instance of a valued constraint
satisfaction problem (VCSP) is defined by finitely many cost functions (valued
constraints) depending on finitely many given variables and a (rational)
threshold; the computational task is to decide whether it is possible to find an
assignment of labels from the domain to the variables so that the  value of the sum of the cost functions is at most the given threshold. In
VCSP instances, cost functions can take on rational or infinite values. VCSPs
not only capture optimisation problems but are also a generalisation of CSPs:
the non-feasibility of an assignment is modelled by allowing the cost
functions to evaluate to $+\infty$. A CSP can thus be seen as a VCSP in which
the cost functions take values in $\{0,+\infty\}$.

\paragraph{Finite domains} In the case in which the domain (i.e., the fixed set
of possible labels for the variables) is a \emph{finite} set the computational
complexity of both CSPs and VCSPs have been completely classified. Moreover,  in
both frameworks a dichotomy theorem holds: every CSP and VCSP is either in P or
is NP-complete, depending on some algebraic condition of the underlying set of
allowed relations and cost functions, respectively. A dichotomy theorem for CSPs
was conjectured by Feder and Vardi~\cite{FederVardi}. The attempt to prove the
conjecture motivated the introduction of the so-called universal algebraic
approach~\cite{JBK} for CSPs, which was later extended to VCSPs
in~\cite{cccjz13:sicomp} and~\cite{Kozik15:icalp}, where an analogue of the complexity dichotomy was
conjectured for VCSPs. The dichotomy conjectures for finite-domain CSPs and
VCSPs inspired an intensive line of research. A complexity classification of
finite-domain VCSPs for sets of cost functions taking finite (rational-only)
values was established in~\cite{tz16:jacm}. A complexity classification of VCSPs
was consequently established in~\cite{Kolmogorov17:sicomp}, assuming a dichotomy
for CSPs, which was proved independently in~\cite{BulatovFVConjecture}
and~\cite{ZhukFVConjecture}. 

\paragraph{Infinite domains} Although most research on CSPs and VCSPs in the
past two decades focused on finite-domain problems, the literature is full of
problems (studied independently of CSPs and VCSPs) that can be modelled as CSPs
or VCSPs \emph{only} if infinite domains are allowed. 
For instance, solvability of linear Diophantine
equations~\cite{GaussElimPoly,KannanBachem} and the model-checking problem for
Kozen’s modal $\mu$-calculus~\cite{KozenModalMu} are examples of problems that
can be modelled as infinite-domain CSPs. Linear Programming, Linear Least Square
Regression~\cite{BoydVandenberghe}, and Minimum Correlation
Clustering~\cite{CorrelationClustering} are examples of problems that can be
modelled as infinite-domain VCSPs. The classes of infinite-domain CSPs and
infinite-domain VCSPs are huge! In fact, \emph{every} computational problem over
a finite alphabet is polynomial-time Turing-equivalent to an infinite-domain
CSP~\cite{BodirskyGrohe}. Therefore, only by focussing on special classes of
infinite-domain CSPs (and VCSPs) is it possible to obtain general complexity
results. There is a rich literature on the computational complexity of special
classes of infinite-domain CSPs, e.g.,
\cite{tcsps-journal,Essentially-convex,Bodirsky15:jacm,Jonsson2016912,Phylo-Complexity,Bodirsky19:jacm,BMPP16,Barto20:sicomp}.

\paragraph{Promise constraint satisfaction} Both infinite-domains CSPs and VCSPs
are extensions of the original (finite-domain) CSPs. Promise constraint
satisfaction problems (PCSPs) are a third, recently introduced extension of
CSPs~\cite{Brakensiek18:soda,PCSPsconf,BBKO19,Ficak19:icalp}. Informally, in a
PCSP the goal is to find an approximately good solution to a problem under the
assumption (the promise) that the problem has a solution. The difference between
CSPs and PCSPs is that in a PCSP instance each constraint comes with two
relations (not necessarily on the same domain), a ``strict'' and a ``weak''
relation. The computational task is then to distinguish between being able to
satisfy all the strict constraints versus not being able to satisfy all the weak
constraints. A CSP can be seen as a PCSP in which the strict and weak
constraints coincide. Perhaps the most well-known example of a PCSP is the
approximate graph colouring problem, in which the task is to distinguish
$k$-colourable graphs from graphs that are not $c$-colourable, for some $c>k$.
(For $c=k$, we get the standard $k$-colouring problem.)
Kazda recently introduced the framework of promise VCSPs on finite domains~\cite{Kazda20}, where he generalised some of the algebraic reductions from (finite-domain) promise CSPs to (finite-domain) promise VCSPs. As far as we are aware, the only other related work on (finite-domain) promise VCSPs is~\cite{Austrin13:toct}.

\paragraph{Convex relaxations} One of the most effective ways to design a
polynomial-time algorithm for solving combinatorial and optimisation problems is
to employ convex relaxations. The idea of convex relaxations is to transform the
original problem to an integer program which is then relaxed to a
polynomial-time solvable convex program~\cite{BoydVandenberghe}, e.g. a linear
program. 
In the context of CSPs, convex relaxations have been studied for robust
solvability~\cite{Kun12:itcs,Dalmau13:toct,BK16,Dalmau19:sicomp}. 
Convex relaxations have been also successfully applied to the study of the three
extensions of CSPs. For VCSPs, characterisations of the
applicability of the basic linear programming
relaxation~\cite{KolmogorovThapperZivny}, constant levels of the Sherali-Adams
linear programming hierarchy~\cite{Thapper17:sicomp}, and a polynomial-size
semidefinite programming relaxation~\cite{tz18:toct} have been provided for
exact solvability. In the PCSP framework, the polynomial-time tractability via a specific
convex relaxation has been characterised for the basic linear programming
relaxation~\cite{PCSPsconf}, affine integer programming relaxation~\cite{PCSPsconf}, and
their combination~\cite{Brakensiek19:soda,Brakensiek20:soda,BrakensiekGWZ20}.
For infinite-domain VCSPs, a sufficient condition has been identified for the
solvability via a combination of the basic linear programming relaxation and an
efficient sampling algorithm (that is, polynomial-time many-one reduction to a
finite-domain VCSP)~\cite{PLVCSPsolvbyLP,CatThesis}. 

\subsection{Contributions}

We initiate the study of convex relaxations for the three generalisations of
CSPs \emph{combined}; that is, convex relaxations for promise valued constraint
satisfaction problems on infinite-domains. We focus on the combined basic linear
programming (BLP) and affine integer programming (AIP) relaxation introduced by
Brakensiek and Guruswami~\cite{Brakensiek20:soda}. This relaxation is stronger
than both the BLP and AIP relaxations individually in the sense that if a class
of promise VCSPs is solved by, say, the BLP relaxation then it is also solved by
the combined relaxation (and the same holds true for the AIP relaxation). The
power of the combined relaxation for (finite-domain) promise CSPs was
established in~\cite{BrakensiekGWZ20}. Rather surprisingly, the combined
relaxation gives an algorithm that solves all tractable (non-promise) CSPs on
Boolean domains, identified in Schaefer's work~\cite{Schaefer78:complexity},
thus giving a unified algorithm.

By extending the argument from~\cite{Brakensiek20:soda}, we establish a
sufficient algebraic condition on the combined relaxation for the solvability of
promise VCSPs in which the domain of the ``weak cost functions'' is possibly
infinite (Theorem~\ref{thm:main}). The proof of this result draws on ideas
introduced in~\cite{Brakensiek20:soda} but requires a non-trivial amount of
technical machinery to make it work in the infinite-domain valued setting. While
our relaxation is inspired by~\cite{Brakensiek20:soda}, it is appropriately
modified to work in the optimisation setting (of valued (P)CSPs). We
remark that the condition we give is known to be necessary already in special
cases of our setting, namely for finite-domain non-valued
PCSPs~\cite{BrakensiekGWZ20}. As an application of our main result, we derive an
algebraic condition under which an \emph{infinite-domain} promise VCSP admitting
an efficient sampling algorithm can be solved in polynomial time using the
combined relaxation (Theorem~\ref{thm:main2}). We emphasise that our main
results (Theorems~\ref{thm:main} and~\ref{thm:main2}) are appreciatively
general, and in particular hold for various special cases of our framework;
e.g., for finite-domain promise VCSPs and infinite-domain promise CSPs.

\paragraph{Approximability of Max-CSPs} PCSPs are approximability problems in
which we require that all constraints should be satisfied, although only in a
weaker sense. Another very natural and well-studied form of relaxation is to try
to maximise the number of satisfied constraints. Convex relaxations have played
a crucial role in this research direction on approximability of (finite-domain)
Max-CSPs, going back to the work of Goemans and
Williamson~\cite{Goemans95:jacm},
e.g.,~\cite{Raghavendra08:everycsp,tulsiani09:stoc,Lee15:stoc,Chan16:jacm-approx,Chan16:jacm,Kothari17:stoc,Ghosh18:toc}.

\section{Preliminaries}
\label{sec:prelims}

Throughout the paper, we denote by $x_i$ the $i$-th component of a tuple $x$. We
denote by $\N$, $\Z$, $\Q$, and $\Q_{\geq0}$ the set of whole numbers, integer
numbers,  rational numbers, and nonnegative rational numbers, respectively. For every $m\in N$, we denote by $\left[m\right]$ the set $\{1,\ldots, m\}\subset\N$. Finally, for every $k \in \Q$ we use the $\left \lceil k \right\rceil$ and $\left \lfloor k \right\rfloor$ to denote the minimum natural number that is at least $k$ and the maximum natural number that is at most $k$, respectively.

\subsection{Valued Constraint Satisfaction Problems}
\label{sect:vcsps}
A \emph{valued structure} $\Gamma$ (over $D$)  consists of 
a signature $\tau$ consisting of function symbols $f$, each equipped with an arity $\ar(f)$; 
a set $D = \dom(\Gamma)$ (the \emph{domain}); 
and, for each $f \in \tau$, a 
\emph{cost function}, i.e.,  a function $f^{\Gamma} \colon D^{\ar(f)} \to {\Q} \cup \{+\infty\}$. 
Here, $+\infty$ is an extra element with the expected properties that for all $c
\in {\Q} \cup \{+\infty\}$, we have
$(+\infty) + c = c + (+\infty) = +\infty$ and $c < +\infty$ for every $c\in\Q$.
Given a valued structure $\g$ with signature $\tau$, for every $f \in \tau$ we define   $\dom(f):=\{t \in D^{\ar(f)}\mid f^{\g}(t)<+\infty\}$.  

Let $\Gamma$ be a valued structure with domain $D$ and signature $\tau$. 
The \emph{valued constraint satisfaction problem} for $\Gamma$, denoted by $\vcsp(\g)$, is the following computational problem.

An \emph{instance}  of $\vcsp(\g)$ is a triple $I:=(V, \phi,u)$
where 
$V$ is a finite set of variables; 
$\phi$ is an expression  of the form
$\sum_{i=1}^{m} f_i(v^i_1,\ldots, v^i_{\ar(f_i)})$,
where $f_1,\dots,f_m \in \tau$ and all the $v^i_j$ are variables from $V$
(each summand is called a $\tau$-term); and
$u$ is a value from $\Q$. 
The task is to decide whether there exists an assignment $s\colon V \to D$, whose \emph{cost}, defined as
\[\phi^{\g}(s(v_1),\ldots, s(v_{\lvert V \rvert})):=\sum_{i=1}^{m} f^\Gamma_i(s(v^i_1),\ldots, s(v^i_{\ar(f_i)}))\]
is finite, and if so, whether there is one whose cost 
is at most $u$.

We remark that, given a valued structure $\g$ over a \emph{finite} signature,
the representation of the structure $\g$ is inessential for computational
complexity as $\g$ is not part of the input.

\subsection{Fractional Homomorphisms and Fractional Polymorphisms}

Let $X$ be a  set. A \emph{discrete probability measure} on $X$ is a map $\mu \colon  \mathcal P(X) \to \left[0,1\right]$ such that 
$\mu (X)=1$ and $\mu$ satisfies the countable additivity property; i.e., for
every countable collection $\{X_n\}_{n \in \N}$ of pairwise disjoint
subsets $X_n \subseteq X$, it holds that $\mu \big(\bigcup_{n \in
\N}X_n\big)=\sum_{n \in \N}\mu(X_n)$.

Given a probability measure $\mu$ on a countable set $X$, we define its
\emph{support} as the set $\supp(\mu):=\{x \in X \mid \mu (\{x\})>0\}.$ In the
reminder, given a probability measure $\mu$ on a set $X$, we use the notation
$\mu(x):=\mu(\{x\})$. The following proposition is a well-known corollary of
countable additivity (proved in Appendix~\ref{app:prelims} for completeness).

\begin{proposition}\label{prop:prob}
Let $\mu$ be a discrete probability measure on a  set $X$. Then  $\supp(\mu)$  is a countable  set. Furthermore, if $X$ is countable then $\sum_{x \in \supp(\mu)}\mu(x)=1$; that is, $\supp(\mu)$ is non-empty.
\end{proposition}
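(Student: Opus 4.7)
The plan is to establish the two assertions in turn, both as direct consequences of countable additivity. For the countability of $\supp(\mu)$, I would stratify the support by the size of the measure: write $\supp(\mu) = \bigcup_{n \geq 1} A_n$, where $A_n := \{x \in X : \mu(x) > 1/n\}$. The key observation is that each $A_n$ is finite; otherwise, $A_n$ would contain an infinite sequence of distinct elements $x_1, x_2, \dots$, and countable additivity applied to the disjoint singletons $\{x_k\}$ would give
\[
\mu\bigl(\{x_1, x_2, \dots\}\bigr) \;=\; \sum_{k=1}^{\infty} \mu(x_k) \;\geq\; \sum_{k=1}^{\infty} \frac{1}{n} \;=\; +\infty,
\]
contradicting $\mu(X) = 1$. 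Hence $\supp(\mu)$ is a countable union of finite sets, and therefore countable.

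For the second assertion, I would assume $X$ is countable. Then $X \setminus \supp(\mu)$ is countable as well, and countable additivity applied to the disjoint decomposition $X = \supp(\mu) \cup (X \setminus \supp(\mu))$, and then to each piece viewed as a countable disjoint union of singletons, yields
\[
1 \;=\; \mu(X) \;=\; \sum_{x \in \supp(\mu)} \mu(x) \;+\; \sum_{x \in X \setminus \supp(\mu)} \mu(x).
\]
The second sum is $0$, since by the very definition of the support every one of its terms is $0$. This gives $\sum_{x \in \supp(\mu)} \mu(x) = 1$, and non-emptiness of $\supp(\mu)$ follows immediately, because otherwise the sum would be the empty sum, equal to $0$ rather than $1$.

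There is no real technical obstacle here; both parts are routine applications of countable additivity. The one point worth flagging is that the countability of $X$ is used in the second part precisely to ensure that $\mu(X \setminus \supp(\mu)) = 0$ via countable additivity on the singletons of the complement. Without this hypothesis the argument of the first part still shows that the support is countable, but one can only conclude $\sum_{x \in \supp(\mu)} \mu(x) \leq 1$ in general.
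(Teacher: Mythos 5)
Your proof is correct and, for the countability of $\supp(\mu)$, follows essentially the same route as the paper: stratify the support as $\bigcup_n A_n$ with each $A_n$ finite by countable additivity over singletons. For the second assertion the paper's written proof actually stops after the countability argument, so your explicit decomposition $X = \supp(\mu) \cup (X \setminus \supp(\mu))$ and the observation that the complement contributes $0$ simply spells out the step the paper leaves implicit; it is the standard argument and entirely fine.
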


Proposition~\ref{prop:prob} will guarantee that all 
supports in this paper are countable sets.

Let $\mu$ be a discrete probability measure on a countable set $X$ and let $Y$
be a random variable with countably many possible outcomes $y_1,y_2,\ldots$
occurring with probabilities $\mu(x_1),\mu(x_2),\ldots$, respectively. The
expectation of $Y$ associated with $\mu$ is $\mathbb E_{x \sim \mu}[Y]=\sum_{n
\in \N}\mu(x_n)y_n$.

Let $C$ and $D$ be two sets. A map $g \colon D^m \to C$ is called an $m$-ary
\emph{operation}. For any $m \in \N$,  we denote by $C^{D^m}$ the set of all maps $g \colon D^m \to C$. 

Let $\g$ and $\Delta$ be valued structures with the same signature $\tau$ with domains $C$ and $D$, respectively. 
A \emph{fractional homomorphism}~\cite{ThapperZivny2012}  from $\Delta$ to $\g$ is a discrete probability measure $\chi$ with a non-empty support on  $C^D$  such that for every function symbol $\gamma \in \tau$ and tuple $a \in D^{\ar(\gamma)}$, it holds that \begin{equation}\label{eq:frachom}\mathbb E_{h \sim \chi}[\gamma^{\g}(h(a))]=\sum_{h \in \supp(\chi)}\chi(h)\gamma^{\g}(h(a))\leq \gamma^{\Delta}(a),\end{equation}
where the functions $h$ are applied component-wise. We write $\Delta \to_f \g$ to indicate the existence of a fractional homomorphism from $\Delta$ to $\g$.

The following proposition, proved for completeness in Appendix~\ref{app:prelims}, is adapted
from~\cite{ThapperZivny2012}, where it was proved in the case of finite-domain
valued structure, and appears in~\cite{PLVCSPsolvbyLP}, where it was stated for 
valued structures with arbitrary domains and for fractional homomorphisms with
finite supports. 

\begin{proposition} \label{prop:frachom}
	Let $\g$ and $\Delta$ be valued structures over the same signature $\tau$
  with domains $C$ and $D$, respectively. Assume $\Delta \to_f \g$. Let
  $V=\{v_1,\ldots,v_n\}$ be a set of variables and $\phi$ a sum of finitely many
  $\tau$-terms with variables from $V$. For every $u\in \Q$, if there exists an
  assignment $s \colon V \to D$ such that $\phi^{\Delta}(s(v_1, \ldots, s(v_n))
  \leq u$, then there exists an assignment $s'\colon V \to C$ such that
  $\phi^{\g}(s'(v_1), \ldots, s'(v_n)) \leq u$. In particular, it holds that
  $\inf_{C} \phi^{\g} \leq \inf_{D} \phi^{\Delta}$. 
\end{proposition}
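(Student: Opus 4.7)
The plan is to apply the fractional homomorphism $\chi$ to the hypothesised assignment $s$ and then use an averaging argument to extract a deterministic assignment of at most the same cost. Concretely, given $s\colon V\to D$ with $\phi^{\Delta}(s(v_1),\ldots,s(v_n))\leq u$, for each $h\in\supp(\chi)\subseteq C^{D}$ I would form the composition $h\circ s\colon V\to C$ and consider the random variable
\[X \;:=\; \phi^{\g}\bigl((h\circ s)(v_1),\ldots,(h\circ s)(v_n)\bigr)\]
where $h$ is sampled according to $\chi$.

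The main computation is to bound $\mathbb{E}_{h\sim\chi}[X]$. Setting $a^{i} := (s(v^{i}_1),\ldots,s(v^{i}_{\ar(f_i)}))\in D^{\ar(f_i)}$ and using linearity of expectation across the finite sum defining $\phi$, I would obtain
\[\mathbb{E}_{h\sim\chi}[X] \;=\; \sum_{i=1}^{m}\mathbb{E}_{h\sim\chi}\!\bigl[f_i^{\g}(h(a^{i}))\bigr] \;\leq\; \sum_{i=1}^{m} f_i^{\Delta}(a^{i}) \;=\; \phi^{\Delta}(s(v_1),\ldots,s(v_n)) \;\leq\; u,\]
where the middle inequality is inequality~(\ref{eq:frachom}) from the definition of fractional homomorphism, applied with $\gamma = f_i$ and $a = a^{i}$ for each $i$. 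In particular the expectation is a finite rational.

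To extract the required $s'$, I would argue by contradiction: if $\phi^{\g}\bigl(h(s(v_1)),\ldots,h(s(v_n))\bigr) > u$ for every $h\in\supp(\chi)$, then Proposition~\ref{prop:prob} tells us $\supp(\chi)$ is a non-empty countable set with $\sum_{h\in\supp(\chi)}\chi(h)=1$ and $\chi(h)>0$ on it, so summing the strict inequalities weighted by $\chi(h)$ yields $\mathbb{E}_{h\sim\chi}[X] > u$, contradicting the bound above. Hence some $h_0\in\supp(\chi)$ satisfies $\phi^{\g}\bigl(h_0(s(v_1)),\ldots,h_0(s(v_n))\bigr)\leq u$, and I would take $s' := h_0\circ s$. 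The ``in particular'' statement $\inf_{C}\phi^{\g}\leq\inf_{D}\phi^{\Delta}$ then follows by varying $u$ over rationals exceeding $\inf_{D}\phi^{\Delta}$ and taking infima.

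The only delicate point — and the reason Proposition~\ref{prop:prob} is invoked — is that $\supp(\chi)$ may be countably infinite and the summands lie in $\Q\cup\{+\infty\}$, so both the linearity-of-expectation step and the final averaging step are really manipulations of countable series. The fractional-homomorphism bound forces this series to be at most $u<+\infty$, which in turn implies that no term $\chi(h)f_i^{\g}(h(a^{i}))$ with $\chi(h)>0$ can contribute $+\infty$; this is what legitimises the manipulations above and is the main technical (if minor) obstacle beyond the finite-domain case.
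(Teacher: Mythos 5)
Your proposal is correct and follows essentially the same route as the paper's proof: compose the candidate assignment with the maps in $\supp(\chi)$, bound the expected cost by the fractional-homomorphism inequality applied term by term to the finite sum, and then use the averaging step (which you phrase as a contradiction) to extract a single $h_0\in\supp(\chi)$ with $\phi^{\g}(h_0\circ s(v_1),\ldots,h_0\circ s(v_n))\leq u$. The remarks about countable supports and $+\infty$ values only make explicit what the paper's argument implicitly relies on.
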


Let $\g$ be a valued structure with  domain $C$ and signature $\tau$. 
An $m$-ary \emph{fractional polymorphism} of $\g$ is a discrete probability
measure on $C^{C^m}$ with a non-empty support such that for every $f \in \tau$
and tuples $a^1,\ldots,a^m \in C^{\ar(f)}$ it holds that \[\mathbb E_{g \sim
\omega}[f^{\g}(g(a^1,\ldots,a^m))]=\sum_{g \in C^{C^m}}\omega(g)f^{\g}(g(a^1,\ldots,a^m))\leq \frac{1}{m}\sum_{i=1}^mf^{\g}(a^i)\] (where
$g$ is applied component-wise).

\subsection{Promise VCSPs}
Let $\g$ and $\Delta$ be two valued structures over the same signature $\tau$ with domains $C$ and $D$, respectively. We say that $(\Delta, \g)$ is a \emph{promise valued template} if there exists a fractional homomorphism from $\Delta$ to $\g$.  
Given a promise valued template $(\Delta,\g)$, the \emph{promise valued
constraint satisfaction problem}~\cite{Kazda20} for $(\Delta,\g)$, denoted by  $\pvcsp(\Delta,\g)$, is the following computational problem.

	An \emph{instance} $I$ of $\pvcsp(\Delta, \g)$	is a triple $I:=(V, \phi,u)$
	where 
      $V$ is a finite set of variables; 
      $\phi$ is an expression  of the form
		$\sum_{i=1}^{m} f_i(v^i_1,\ldots, v^i_{\ar(f_i)})$,
		where $f_1,\dots,f_m \in \tau$ and all the $v^i_j$ are variables from $V$; and
      $u$ is a value from $\Q$. 
	
  The task is to output 
      \textsc{yes} if there exists an assignment $s\colon V \to D$ with cost
		\[\phi^{\Delta}(s(v_1),\ldots, s(v_{\lvert V \rvert})):=\sum_{i=1}^{m} f^\Delta_i(s(v^i_1),\ldots, s(v^i_{\ar(f_i)}))\leq u\]
      and output \textsc{no} if \emph{every}  assignment $s'\colon V \to C$ has cost
		\[\phi^{\g}(s'(v_1),\ldots, s'(v_{\lvert V \rvert})):=\sum_{i=1}^{m} f^{\g}_i(s'(v^i_1),\ldots, s'(v^i_{\ar(f_i)}))\nleq u.\]

Note that every valued structure $\g$ is fractionally homomorphic to itself and thus
  $\vcsp(\g)$ is the same as $\pvcsp(\g,\g)$.

Let  $(\Delta,\g)$ be a promise valued template. We remark that if the common signature $\tau$
is finite then the representation of the template is inessential for the computational complexity
of $\pvcsp(\Delta,\g)$ as $(\Delta,\g$) is not part of the input.

Let $e_i^{(m)}\colon D^m \to D$ denote the $m$-ary projection on $D$ onto the
$i$-th coordinate. Let $\mathcal J^{(m)}_D:=\{e_1^{(m)}, \ldots, e_m^{(m)} \}$,
i.e., the set of all $m$-ary projections on $D$.

An $m$-ary \emph{promise fractional polymorphism}\footnote{These are called weighted polymorphisms in~\cite{Kazda20}.} of a promise valued template $(\Delta,\g)$ is a pair $\omega:=(\omega_I,\omega_O)$ where $\omega_O$ is a discrete probability measure on $C^{D^m}$ with a non-empty support and $\omega_I$  is a  discrete probability measure with (finite) support $\supp(\omega_I)=\mathcal J^{(m)}_D$ such that for every $f \in \tau$ and tuples $a^1,\ldots,a^m \in D^{\ar(f)}$ it holds that
\begin{multline}\label{eq:promisefracpol}
\mathbb E_{g \sim \omega_O}[f^{\g}(g(a^1,\ldots,a^m))]
=
\sum_{g \in \supp(\omega)}\omega_O(g)f^{\g}(g(a^1,\ldots,a^m))\\
\leq\quad
\sum_{i=1}^m\omega_I(e^{(m)}_i)f^{\Delta}(a^i)
=
  \mathbb E_{e \sim \omega_I}[f^{\Delta}(e(a^1,\ldots,a^m)).
\end{multline}

\begin{remark}
An $m$-ary fractional polymorphism $\omega$ of a valued structure $\g$ with domain $C$ can be seen as an $m$-ary promise fractional polymorphism $\mu=(\mu_I,\mu_O)$ of $(\g,\g)$ such that $\mu_O=\omega$ and $\mu_I(e^{(m)}_i)=\frac{1}{m}$ for $1\leq i \leq m$.\end{remark}

\subsection{Block-Symmetric Maps}

Let $S_m$ be the symmetric group on $\{1,\ldots,m\}$. An $m$-ary map $g$ is
\emph{fully symmetric} if for every permutation $\pi \in S_m$, we have
$g(x_1,\ldots,x_m)=g(x_{\pi(1)}, \ldots, x_{\pi(m)})$.

An $m$-ary map $g$ is \emph{block-symmetric} if there exists a partition of the
coordinates of $g$ into blocks $B_1\cup \cdots \cup B_k=\left[m\right]$ such
that $g$ is permutation-invariant within each block $B_i$. Let $\mathcal
P_{sym}(g)$ be the set of all partitions into symmetric blocks of $g$. For
$B_1\cup \cdots \cup B_k\in \mathcal P_{sym}(m)$, we define $w(g,B_1\cup \cdots
\cup B_k):=\min_{1\leq i \leq k}\lvert B_i \rvert$ and we define the
\emph{width} of $g$ to be \[w(g):=\max\{w(g,B_1\cup \cdots \cup B_k)\mid
B_1\cup \cdots \cup B_k \in \mathcal P_{sym}(g)\}.\]

Block-symmetric operation with width $1$ are fully symmetric operations.
The next example shows operations that are block-symmetric but not fully
symmetric.

\begin{example}\label{ex:movingaverage1} 

  We consider so-called ``moving averages'' (see,
  e.g.,~\cite{hu2011removal,raudys2013moving}).
	For $k\in \N$, a $k$-ary \emph{$2$-period weighted centred moving average} is the operation $\wma^{({k})}\colon \Q^k \to \Q$ defined by \[\wma^{({k})}(x^1,\ldots,x^{k}):=\frac{1}{3k}\left( \sum_{i=1}^{\lfloor \frac{k}{4}\rfloor}x^i+2 \sum_{i=\lfloor \frac{k}{4}\rfloor+1}^{\lfloor \frac{3}{4}k\rfloor}x^i+\sum_{i=\lfloor \frac{3}{4}k\rfloor+1}^{k}x^i\right).\]
	It can be verified that a $k$-ary $2$-period weighted centred moving average, where $k=2m+1$, is a block-symmetric operation whose symmetric blocks are $B_1:=\{1,\ldots, \lfloor \frac{k}{4}\rfloor, \lfloor\frac{3}{4}k\rfloor+1, \ldots, k\}$ and $B_2:=\{ \lfloor \frac{k}{4}\rfloor+1,\ldots, \lfloor \frac{3}{4}k\rfloor\}$. Observe that $\lvert B_1\rvert=\lfloor \frac{k}{2}\rfloor$ and $\lvert B_2\rvert=\lceil \frac{k}{2}\rceil$.
\end{example}
An $m$-ary fractional polymorphism $\omega$ of a valued structure $\g$ is \emph{block-symmetric} if there exists a partition of the coordinates of $g$ into blocks $B_1\cup \cdots \cup B_k=\left[m\right]$ such that every operation in $\supp(\omega)$ is permutation-invariant within each coordinate block $B_i$. 

\begin{example}\label{ex:movingaverage2} 

A $k$-ary \emph{average} is a map $A:\Q^k\to\Q$ such that, for every $x\in\Q$,
  $A(x,\ldots,x)=x$ (i.e., $A$ is idempotent) and, for every
  $x^1,\ldots,x^k\in\Q$, we have $\min(x^1,\ldots,x^k)\leq A(x^1,\ldots,x^k)\leq
  \max(x^1,\ldots,x^k)$, and for every $\lambda \in \Q$ it holds $\lambda
  A(x^1,\ldots,x^k)=A(\lambda x^1,\ldots,\lambda x^k)$. Examples of averages
  include the arithmetic average $\avg^{(k)}$ defined by
  $\avg^{(k)}(x^1,\ldots,x^k)=\frac{1}{k}(x^1+\ldots+x^k)$ and the $2$-period
  centred moving average  (cf.~Example \ref{ex:movingaverage1}).

Let $A_1$ and $A_2$ be two different averages. A function $f\colon \Q \to \QQ$
  is called \emph{$(A_1,A_2)$-convex}\footnote{In~\cite{ANDERSON20071294,
  aumann1933konvexe}, the notion of $(A_1,A_2)$-convexity is defined for
  $A_1,A_2$ symmetric averages. An average $A$ is symmetric if for every $k \in
  \N$, and every $x^1, \ldots, x^k \in \Q$, it holds
  $A(x^1,\ldots,x^k)=A(x^{\pi(1)}, \ldots, x^{\pi(k)})$ for every $\pi \in
  S_k$.} if for every $k \in \N$ and every $x^1, \ldots, x^k \in \Q^n$ it holds
  $f(A_1(x^1,\ldots,x^k))\leq A_2(f(x^1),\ldots,f(x^k))$.

 A valued structure with  domain $\Q$  containing only $(\wma, \avg)$-convex
  functions has, for every $m \in \N$, a $(2m+1)$-ary block-symmetric fractional polymorphism $\omega^{({2m+1})}$ defined by 
$\omega^{({2m+1})}(g)=1$ if $g=\wma^{({2m+1})}$ and $0$ otherwise.
\end{example}

Given a promise valued template  $(\Delta,\g)$, an $m$-ary promise fractional polymorphism  $\omega=(\omega_I,\omega_O)$ of $(\Delta,\g)$ is \emph{block-symmetric} if \begin{itemize}
\item  there exists a partition of $\left[m\right]$ into blocks $B_1\cup \cdots \cup B_k$ such that every map in $\supp(\omega_O)$ is s permutation-invariant within each coordinate block $B_i$, and 
\item $\sum_{i \in B_j}\omega_I(e^{(m)}_i)=\frac{\lvert B_j\rvert}{m}$ for every $j \in \{1,\ldots, k\}$.
\end{itemize}

The proof of the following lemma can be found in
Appendix~\ref{app:prelims}. 

\begin{lemma}\label{lemma:weightsyminput}
	Let $(\Delta,\g)$ be a promise valued template and let $m \in \N$. If $\omega=(\omega_I, \omega_O)$ is an $m$-ary block symmetric promise fractional polymorphism of $(\Delta,\g)$, then also $\omega'=(\omega'_I, \omega_O)$, where $\omega'_I(e^{(m)}_i)=\frac{1}{m}$ for $1\leq i \leq m$, is an $m$-ary block-symmetric promise fractional polymorphism of $(\Delta,\g)$.
\end{lemma}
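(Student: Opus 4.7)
The plan is to verify the two defining conditions for $\omega' = (\omega'_I, \omega_O)$ to be an $m$-ary block-symmetric promise fractional polymorphism of $(\Delta, \g)$ for the same partition $B_1 \cup \cdots \cup B_k = [m]$ witnessing block-symmetry of $\omega$. The block-symmetric conditions are immediate for $\omega'$: the output part $\omega_O$ is unchanged, and the input weighting satisfies $\sum_{i \in B_j}\omega'_I(e^{(m)}_i) = \frac{|B_j|}{m}$ by construction since $\omega'_I$ is uniform. So the only real content is the promise fractional polymorphism inequality~\eqref{eq:promisefracpol}, namely that for every $f \in \tau$ and tuples $a^1,\ldots,a^m \in D^{\ar(f)}$,
\[
\mathbb{E}_{g \sim \omega_O}[f^{\g}(g(a^1,\ldots,a^m))] \;\leq\; \tfrac{1}{m}\sum_{i=1}^m f^{\Delta}(a^i).
\]

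The key idea is to exploit the permutation-invariance of every $g \in \supp(\omega_O)$ within each block $B_j$ in order to symmetrize the right-hand side. Let $H \leq S_m$ denote the subgroup of permutations that fix each block $B_j$ setwise (so $H = S_{B_1} \times \cdots \times S_{B_k}$). For every $\pi \in H$ and every $g \in \supp(\omega_O)$, block-symmetry yields $g(a^1,\ldots,a^m) = g(a^{\pi(1)},\ldots,a^{\pi(m)})$, and hence
\[
\mathbb{E}_{g \sim \omega_O}[f^{\g}(g(a^1,\ldots,a^m))] \;=\; \mathbb{E}_{g \sim \omega_O}[f^{\g}(g(a^{\pi(1)},\ldots,a^{\pi(m)}))] \;\leq\; \sum_{i=1}^m \omega_I(e^{(m)}_i) f^{\Delta}(a^{\pi(i)}),
\]
where the inequality is~\eqref{eq:promisefracpol} applied to the permuted tuples and to the original polymorphism $\omega$. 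Averaging this bound uniformly over all $\pi \in H$ keeps the left-hand side unchanged, so
\[
\mathbb{E}_{g \sim \omega_O}[f^{\g}(g(a^1,\ldots,a^m))] \;\leq\; \frac{1}{|H|}\sum_{\pi \in H}\sum_{i=1}^m \omega_I(e^{(m)}_i) f^{\Delta}(a^{\pi(i)}).
\]

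The final step is a routine computation. For each $i \in B_j$, averaging $f^{\Delta}(a^{\pi(i)})$ over $\pi \in H$ yields $\frac{1}{|B_j|}\sum_{i' \in B_j} f^{\Delta}(a^{i'})$, since $\pi$ acts as a uniformly random permutation of $B_j$. Grouping the sum over $i$ by blocks and substituting the block-symmetric identity $\sum_{i \in B_j} \omega_I(e^{(m)}_i) = \frac{|B_j|}{m}$ gives
\[
\sum_{j=1}^k \frac{1}{|B_j|}\Bigl(\sum_{i \in B_j} \omega_I(e^{(m)}_i)\Bigr) \sum_{i' \in B_j} f^{\Delta}(a^{i'}) \;=\; \sum_{j=1}^k \frac{1}{m} \sum_{i' \in B_j} f^{\Delta}(a^{i'}) \;=\; \frac{1}{m}\sum_{i=1}^m f^{\Delta}(a^i),
\]
as required. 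The main (minor) obstacle is purely notational: writing the symmetrization argument cleanly and ensuring the block-symmetry identity for $\omega_I$ is invoked at exactly the right place. The argument uses only inequality~\eqref{eq:promisefracpol} applied to permuted tuples together with block-symmetry, with no need to revisit discrete-measure subtleties since $H$ is finite.
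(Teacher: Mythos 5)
Your proof is correct and follows essentially the same route as the paper's: apply the defining inequality of $\omega$ to block-permuted input tuples, use block-symmetry of the maps in $\supp(\omega_O)$ to keep the left-hand side fixed, and average over all block-preserving permutations before invoking $\sum_{i\in B_j}\omega_I(e^{(m)}_i)=\lvert B_j\rvert/m$. The only difference is cosmetic: you normalise by the order of the group $S_{B_1}\times\cdots\times S_{B_k}$ while the paper carries the factorial factors explicitly.
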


In view of Lemma~\ref{lemma:weightsyminput}, we will assume without loss of generality  that any $m$-ary
block-symmetric promise fractional polymorphism $\omega=(\omega_I, \omega_O)$ is
such that $\omega_I$ assign $\frac{1}{m}$ to each $m$-ary projection on the
domain of $\Delta$ and we will identify $\omega$ with $\omega_O$.

\subsection{The Basic Linear Programming Relaxation}
\label{subsec:blp}

Every VCSP over a finite domain has a natural linear programming relaxation.
Let $\Delta$ be a valued structure with finite domain $D$ and signature $\tau$. Let $I$ be an instance of VCSP($\Delta$) with set of variables $V=\{x_1,\ldots,x_d\}$, objective function 
$\phi(x_1,\ldots,x_d)=\sum_{ j \in J} f_j(x_1^j,\ldots,x_{n_j}^j)$,
with $f_j \in\tau,\; x^j=(x_1^j,\ldots,x_{n_j}^j)  \in V^{n_j}  \text{, for all
} j \in J$ (the set $J$ is finite and indexing the cost functions that are
summands of $\phi$), and a threshold $u \in \Q$.\footnote{Note that the BLP relaxation does not depend on the threshold $u$.}
Define the sets of variables as follows:
$W_1:=\{\lambda_{j}(t)\mid j \in J \text{ and }t  \in D^{n_j}\}$,
$W_2:=\{\mu_{x_i}(a) \mid x_i \in V \text{ and } a \in D\}$,
and $W:=W_1\cup W_2$.
Then the \emph{basic linear programming} (BLP) relaxation  associated to  $I$ (see~\cite{ThapperZivny2012},
\cite{KolmogorovThapperZivny}, and references therein) is a linear program with
variables  $W$ and  is defined in Figure~\ref{fig:blp}.

\begin{figure}[thb]
\fbox{\parbox{0.98\textwidth}{
\[
\blp(I, \Delta):=\min{ \sum_{j \in J} \sum_{ t \in D^{n_j}} \lambda_{j}(t) f^{\Delta}_j(t)} \]
\vskip -.9\baselineskip subject to 
\begin{align*}
\sum_{t \in D^{n_j}: t_{\ell}=a}\lambda_j(t)=\mu_{x_{\ell}^j}(a) &\quad & \text{for all } j \in J \text{, } \ell \in \{1,\ldots,n_j\}\text{, } a \in D,\\ 
\sum_{ a \in D}\mu_{x_i}(a)=1 &\quad &\text{for all } x_i \in V,\\
\lambda_j(t)=0  &\quad & \text{for all } j \in J\text{, } t \notin \dom(f_j),\\
0\leq \lambda_j(t), \mu_{x_i}(a) \leq 1 &\quad & \text{for all } \lambda_j(t) \in W_1,\; \mu_{x_i}(a) \in W_2.
\end{align*}
}}
\caption{BLP}
\label{fig:blp} 
\end{figure}
We remark that a solution to the BLP also satisfies the constraints $\sum_{t \in D^{n^j}}\lambda_j(t)=1$ for all $j \in J$.
If there is no feasible solution to the BLP then $\blp(I,\Delta)=+\infty$. 
For a finite-domain VCSP instance, the corresponding BLP relaxation can be computed in polynomial time. 

We note that it is not difficult to lift the existing results characterising the
power of BLP (in terms of fully symmetric operations) for CSPs~\cite{Kun12:itcs},
VCSPs~\cite{KolmogorovThapperZivny}, promise CSPs~\cite{PCSPsconf,BBKO19}, and
infinite-domain VCSPs~\cite{PLVCSPsolvbyLP,CatThesis} to our setting of promise
VCSPs with infinite domains, cf.~Theorem~\ref{thm:fullysymfpol} in
Appendix~\ref{app:blp}. Our focus, however, is on the stronger combined
relaxation presented in Section~\ref{sec:combined}.

\subsection{The Affine Integer Programming Relaxation}

Let $\Delta$ be a valued structure with finite domain $D$ and signature $\tau$. Let $I$ be an instance of VCSP($\Delta$) with set of variables $V=\{x_1,\ldots,x_d\}$, and objective function 
$\phi(x_1,\ldots,x_d)=\sum_{ j \in J} f_j(x_1^j,\ldots,x_{n_j}^j)$,
with $ f_j \in\tau,\; x^j=(x_1^j,\ldots,x_{n_j}^j)  \in V^{n_j}  \text{, for all
} j \in J$ (the set $J$ is finite and indexing the cost functions that are
summands of $\phi$),  and a threshold $u \in \Q$.\footnote{Note that the AIP relaxation does not depend on the threshold $u$.}
Define the sets of variables as follows:
$R_1:=\{q_{j}(t)\mid j \in J \text{ and }t \in D^{n_j}\}$,
$R_2:=\{r_{x_i}(a) \mid x_i \in V \text{ and } a \in D\}$,
and $R:=R_1\cup R_2$.
Then the \emph{affine integer programming} (AIP) relaxation associated to
$I$~\cite{Brakensiek19:soda,Brakensiek20:soda} is an integer program with
variables  $R$ and  is defined in Figure~\ref{fig:aip}.

\begin{figure}[tbh]
\fbox{\parbox{0.98\textwidth}{
\[
\aff(I, \Delta):=\min{\sum_{j \in J} \sum_{ t \in D^{n_j}} q_{j}(t) f^{\Delta}_j(t)} \]
\vskip -.9\baselineskip subject to 
\begin{align*}
\sum_{t \in D^{n_j}: t_{\ell}=a}q_j(t)=r_{x_{\ell}^j}(a) &\quad & \text{for all } j \in J \text{, } \ell \in \{1,\ldots,n_j\}\text{, } a \in D,\\ 
\sum_{ a \in D}r_{x_i}(a)=1 &\quad &\text{for all } x_i \in V,\\
q_j(t)=0  &\quad & \text{for all } j \in J\text{, } t \notin \dom(f_j),\\
q_j(t), r_{x_i}(a) \in \Z &\quad & \text{for all } q_j(t) \in R_1,\; r_{x_i}(a) \in R_2.
\end{align*}
}}
\caption{AIP}\label{fig:aip}
\end{figure}

We remark that a solution to the AIP also satisfies the constraints $\sum_{t \in
D^{n^j}}\lambda_q(t)=1$ for all $j \in J$. If there is no feasible solution to
the AIP then $\aff(I,\Delta)=+\infty$. For a finite-domain VCSP instance, the
corresponding AIP relaxation can be computed in polynomial time. Since the feasibility version of AIP can be solved in polynomial time~\cite{KannanBachem, Brakensiek19:soda}, (the optimisation version of) AIP can be solved in (oracle) polynomial time using an oracle for the feasibility version of the problem (see~\cite[Theorem~6.4.9]{GLS93}).

\section{The Combined BLP and AIP Relaxation for PVCSPs}
\label{sec:combined}

Let $(\Delta,\g)$ be a promise valued template such that the domain of $\Delta$ is a \emph{finite} set.
We may solve $\pvcsp(\Delta,\g)$ by using a \emph{combination} of the BLP
relaxation and the AIP relaxation of $\Delta$, as proposed (for finite-domain
promise non-valued) CSPs in~\cite{Brakensiek20:soda}, appropriately modified to
the valued setting.

To describe such an algorithm, we need the following definition.

\begin{definition}\label{def:refinement}
Let $\Delta$ be a valued structure with finite domain $D$ and signature $\tau$.
Let us consider an instance $I:=(V,\phi,u)$  of $\vcsp(\Delta)$ such that
$\phi(x_1,\ldots,x_d)=\sum_{ j \in J} f_j(x_1^j,\ldots,x_{n_j}^j)$. Assume that $\blp(I,\Delta)\leq u$. 
We define $(\lambda^\star,\mu^\star)$ as follows. \begin{itemize}
\item If there exists a relative interior point of the rational feasibility polytope of $\blp(I, \Delta)$ with cost at most $u$,\footnote{There is a polynomial-time algorithm~\cite{GLS93, Brakensiek20:soda} that decides the existence of a relative interior point in the rational feasibility polytope of $\blp(I, \Delta)$ with cost at most $u$ and, in the case it exists, finds it.} then $(\lambda^\star,\mu^\star)$ is such a point;
\item otherwise, $(\lambda^\star,\mu^\star)$ is defined to be a point from the relative interior of the optimal polytope of  $\blp(I,\Delta)$.\footnote{Such a point can be found in polynomial time by applying the algorithm in~\cite{GLS93, Brakensiek20:soda} to the feasibility linear program defined by adding to the constraints defining the feasibility polytope of $\blp(I, \Delta)$ the additional constraint $\sum_{j \in J} \sum_{ t \in D^{n_j}} \lambda_{j}(t) f^{\Delta}_j(t)=u$.}     
\end{itemize}
The \emph{refinement of $\aff(I,\Delta)$ with respect to $(\lambda^\star,\mu^\star)$}  is the integer program  $\aff^\star(I,\Delta)$ obtained by adding to $\aff(I,\Delta)$ the constraints
\begin{align*}
&q_j(t)=0 \quad& \text{ for every }  j \in J, t \in D^{n_j} &\text{ such that } \lambda^\star_{j}(t)=0,\\
&r_{x_i}(a)=0 \quad& \text{ for every }  x_i \in V, a \in D &\text{ such that } \mu^\star_{x_i}(a)=0.
\end{align*}
\end{definition}

\begin{algorithm}[tbh] 
	\SetAlgoNoLine
	\KwIn{\\\qquad $I:=(V,\phi,u)$, a valid instance of $\pvcsp(\Delta,\g)$} 
  \KwOut{\\\qquad\textsc{yes} if there exists an assignment $s \colon V\to \dom(\Delta)$ such that $\phi^{\Delta}(s(x_1),\ldots,s(x_{\lvert V\rvert }))\leq u$\\\qquad \textsc{no} if there is no assignment $s \colon V\to \dom(\g)$ such that $\phi^{\Gamma}(s(x_1),\ldots,s(x_{\lvert V\rvert }))\leq u$}
  \medskip
	$\blp(I, \Delta)$\;
  \eIf{$\blp(I,\Delta)\nleq u$}{output \textsc{no}\;
	}{
	    $(\lambda^\star,\mu^\star)$,  as in Definition~\ref{def:refinement}\; 
		$\aff^\star(I,\Delta):=\text{refinement of }\aff(I,\Delta)$ with respect to $(\lambda^\star,\mu^\star)$, as in Definition~\ref{def:refinement}\;
    \eIf{$\aff^\star(I,\Delta)\nleq u$}{output \textsc{no}\;
    }{output \textsc{yes}\;}     }
	
	\caption{The combined BLP $+$ AIP Relaxation Algorithm for $\pvcsp(\Delta,\g)$}
	\label{fig:alg}
\end{algorithm}

As our main result, we now present a sufficient condition under which Algorithm \ref{fig:alg}
correctly solves $\pvcsp(\Delta,\g)$. 

\begin{theorem}\label{thm:main}
Let  $(\Delta,\g)$ be a promise valued template such that  $\Delta$ has a \emph{finite domain}. Assume that for all $L \in \N$ there exists a block-symmetric promise fractional polymorphism of $(\Delta,\g)$ with arity $2L+1$ having two symmetric blocks of size $L+1$ and $L$, respectively. Then Algorithm~\ref{fig:alg}  correctly solves $\pvcsp(\Delta,\g)$ (in polynomial time). 
\end{theorem}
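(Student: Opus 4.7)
There are two directions to verify: completeness (every witness $s\colon V\to D$ with $\phi^\Delta(s)\leq u$ makes the algorithm output \textsc{yes}) and soundness (whenever the algorithm outputs \textsc{yes}, some $s'\colon V\to C$ satisfies $\phi^\g(s')\leq u$).

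Completeness is the shorter direction and hinges entirely on the relative-interior clause of Definition~\ref{def:refinement}. A witness $s$ yields the $0/1$ point $\lambda_j(t)=\mathbb{1}[t=(s(x^j_1),\dots,s(x^j_{n_j}))]$, $\mu_{x_i}(a)=\mathbb{1}[a=s(x_i)]$, which is BLP-feasible of cost $\leq u$, so $\blp(I,\Delta)\leq u$. Because $(\lambda^\star,\mu^\star)$ lies in the relative interior of the appropriate polytope, $\lambda^\star_j(t)=0$ only if every feasible (respectively, every optimal, in the second clause of Definition~\ref{def:refinement}) BLP point assigns $0$ to that coordinate, and similarly for $\mu^\star$. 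Hence the $s$-induced $0/1$ point respects the extra zero-constraints of the refinement and is a feasible integer point of $\aff^\star(I,\Delta)$ of cost $\leq u$, so the algorithm outputs \textsc{yes}.

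For soundness, assume $\blp(I,\Delta),\aff^\star(I,\Delta)\leq u$, with solutions $(\lambda^\star,\mu^\star)$ and $(q,r)$; by the refinement, $q_j(t)=0$ when $\lambda^\star_j(t)=0$ and $r_{x_i}(a)=0$ when $\mu^\star_{x_i}(a)=0$. Pick $L\in\N$ divisible by a common denominator of the entries of $\lambda^\star,\mu^\star$ and large enough that $L\lambda^\star_j(t)+q_j(t)\geq 0$ and $L\mu^\star_{x_i}(a)+r_{x_i}(a)\geq 0$ for all indices (which is possible because $\lambda^\star_j(t)>0$ whenever $q_j(t)\neq 0$, and analogously for $\mu^\star$). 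Invoke the hypothesised $(2L+1)$-ary block-symmetric promise fractional polymorphism $\omega$ with blocks $B_1,B_2$ of sizes $L+1,L$; by Lemma~\ref{lemma:weightsyminput} we may take $\omega_I$ uniform and identify $\omega$ with $\omega_O$. Define the nonnegative integer column-multiplicity targets
\[
\alpha_j(t):=L\lambda^\star_j(t)+q_j(t)\quad(\text{sum }L+1),\qquad \beta_j(t):=L\lambda^\star_j(t)\quad(\text{sum }L),
\]
with per-variable marginals $M^1_\ell(a):=L\mu^\star_{x_\ell}(a)+r_{x_\ell}(a)$ and $M^2_\ell(a):=L\mu^\star_{x_\ell}(a)$ consistent across constraints by the BLP/AIP marginal equalities. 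For each variable $x_\ell$, construct a tuple $y^{x_\ell}\in D^{2L+1}$ with $B_1$-multiset $M^1_\ell$ and $B_2$-multiset $M^2_\ell$, arranged so that for every constraint $j$ the column multisets $\{(y^{x^j_1}_i,\dots,y^{x^j_{n_j}}_i):i\in B_1\}$ and the analogous one over $B_2$ equal $\alpha_j$ and $\beta_j$, respectively. Setting $s'(x_\ell):=g(y^{x_\ell})$ for $g\sim\omega$ yields a well-defined assignment by block-symmetry, and applying the promise fractional polymorphism inequality~\eqref{eq:promisefracpol} constraint-wise and summing gives
\[
\mathbb{E}_{g\sim\omega}\!\left[\phi^\g(s'(v_1),\dots,s'(v_{|V|}))\right]\leq \frac{1}{2L+1}\sum_{j}\sum_{t}(\alpha_j(t)+\beta_j(t))\,f^\Delta_j(t)\leq \frac{2L\cdot\blp(I,\Delta)+\aff^\star(I,\Delta)}{2L+1}\leq u,
\]
so some $g\in\supp(\omega)$ (which is countable by Proposition~\ref{prop:prob}) witnesses $\phi^\g(s')\leq u$.

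The main obstacle is the combinatorial joint-realization step: arranging each $y^{x_\ell}$ so that \emph{all} joint column multisets $\alpha_j,\beta_j$ are simultaneously produced. While the per-variable marginals $M^1_\ell,M^2_\ell$ align automatically from the BLP/AIP marginal constraints, matching full joint distributions cannot be achieved by purely local choices, since the BLP polytope has fractional vertices in general. This is precisely the role of the AIP ``correction'' $q$: being an integer solution supported where $\lambda^\star$ is nonzero, it bridges the gap between BLP's fractional joint distributions and the integer lattice of tuple multiplicities, and for $L$ sufficiently large this realization is always feasible. The remaining technical care lies in (i) verifying polynomial-time execution (BLP by linear programming; the relative interior and AIP feasibility by~\cite{GLS93,KannanBachem,Brakensiek19:soda}) and (ii) ensuring that the countable support of $\omega_O$ in the infinite codomain $C$ yields a finite-cost $s'$ with positive probability, which follows from the finiteness of the right-hand side of the polymorphism inequality.
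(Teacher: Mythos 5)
Your completeness direction is essentially the paper's: the witness $s$ gives an integral BLP point of cost at most $u$, and the relative-interior choice of $(\lambda^\star,\mu^\star)$ ensures the refinement's extra zero-constraints do not cut that point off. One small omission: in the second clause of Definition~\ref{def:refinement} you need the $s$-induced point to lie in the \emph{optimal} polytope, which holds because that clause only applies when $\blp(I,\Delta)=u$, forcing $\phi^{\Delta}(s)=u=\blp(I,\Delta)$; this should be said, but it is easily filled.

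The genuine gap is in the soundness direction, precisely at the step you flag as ``the main obstacle''. You require a single fixed arrangement of each tuple $y^{x_\ell}\in D^{2L+1}$ such that \emph{for every constraint $j$ simultaneously} the joint column multisets equal $\alpha_j$ over $B_1$ and $\beta_j$ over $B_2$, and you then assert that ``for $L$ sufficiently large this realization is always feasible'' thanks to the AIP correction. That assertion is false in general: realizing all joint column multisets at once is equivalent to building one global integer-weighted distribution on assignments whose marginal on every constraint scope is prescribed, and locally consistent marginals need not be globally realizable (e.g., three Boolean variables with two perfectly correlated pairwise distributions and one perfectly anti-correlated one have matching unary marginals but no joint realization); enlarging $L$ does not remove this obstruction, and the integrality of $q$ has nothing to do with it. The repair — and the route the paper takes — is to observe that no simultaneous arrangement is needed: every $g\in\supp(\omega)$ is block-symmetric, so $g(y^{x_\ell})$, and hence $s'(x_\ell)$, depends only on the pair of block multisets of $y^{x_\ell}$; therefore when bounding constraint $j$ one may reorder the entries of the tuples of the variables in its scope to realize $\alpha_j,\beta_j$ \emph{for that constraint alone} (possible because the BLP/AIP marginal equations make the per-variable block multisets agree), without changing any assigned value. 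The paper formalizes exactly this by passing through the bimultiset structure $\mathcal B^{2L+1}_{B_1,B_2}(\Delta)$, whose cost functions take a minimum over arrangements separately per constraint, and then transfers the bound to $\g$ via Lemma~\ref{lemma:fhomhalfsym2} and Proposition~\ref{prop:frachom}. With that fix, your scaling $\alpha_j=L\lambda^\star_j+q_j$, $\beta_j=L\lambda^\star_j$ (nonnegative by the refinement constraints) coincides with the paper's $P_{j,B_1},P_{j,B_2}$, and the rest of your computation goes through; as written, however, the key step is both unnecessary and incorrectly justified.
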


Note that in Theorem~\ref{thm:main} the domain of the valued structure $\g$ can be finite or (countably) infinite. 

To prove Theorem~\ref{thm:main} we need to use a preliminary lemma and the notion of a \emph{bimultiset-structures}.
Let $\Delta$ be a valued $\tau$-structure with domain $D$, let $L \in \N$, and let $B_1\cup B_2$ any partition of $\left[2L+1\right]$ such that $\lvert B_1\rvert=L+1$ and $\lvert B_2\rvert=L$.  The \emph{bimultiset-structure} $\mathcal B^{2L+1}_{B_1,B_2}(\Delta)$   is the valued structure with domain $\multiset{D}{L+1}\times\multiset{D}{L}$ i.e., the set whose elements $(\alpha,\beta)$ are pairs of multisets of elements from $D$ of size $L+1$ and of size $L$, respectively. For every $k$-ary function symbol $f\in \tau$, and $(\alpha_1,\beta_1) \ldots,(\alpha_k,\beta_k) \in
\multiset{D}{L+1}\times\multiset{D}{L}$ the function $f^{\mathcal B^{2L+1}_{B_1,B_2}(\Delta)}$ is defined as follows
\begin{align*}&f^{\mathcal B^{2L+1}_{B_1,B_2}(\Delta)}((\alpha_1,\beta_1)\ldots, (\alpha_k,\beta_k)):=\frac{1}{2L+1}\min_{\substack{t^1,\ldots,t^k \in D^{2L+1}:\\ \{t^{\ell}\}_{B_1}=\alpha_{\ell}, \{t^{\ell}\}_{B_2}=\beta_{\ell}}}\sum_{i=1}^{2L+1}f^{\Delta}(t^1_i,\ldots,t^k_i),\end{align*}
where $\{t^{\ell}\}_{B_1}$ denotes the multiset $\{t^{\ell}_i \mid i \in B_1\}$ and  $\{t^{\ell}\}_{B_2}$ denotes the multiset $\{t^{\ell}_i \mid i \in B_2\}$. 

\begin{lemma}\label{lemma:fhomhalfsym2}
		Let  $(\Delta,\g)$ be a promise valued template such that  $\Delta$ has a
    \emph{finite domain}. Let $L\in \N$ and assume that $(\Delta,\g)$ has a block-symmetric promise fractional polymorphism of arity $2L+1$ with two symmetric blocks $B_1$ and $B_2$ of size $L+1$ and $L$, respectively. Then $\mathcal B^{2L+1}_{B_1,B_2}(\Delta)$ is fractionally homomorphic to $\g$. 
\end{lemma}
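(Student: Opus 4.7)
The plan is to construct the required fractional homomorphism as the push-forward of the given promise fractional polymorphism.

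First, by Lemma~\ref{lemma:weightsyminput} and the convention adopted right after it, I may assume the block-symmetric promise fractional polymorphism $\omega$ of $(\Delta,\g)$ of arity $2L+1$ has uniform input distribution $\omega_I(e_i^{(2L+1)})=1/(2L+1)$, and identify $\omega$ with $\omega_O$, a discrete probability measure on $C^{D^{2L+1}}$ (writing $C:=\dom(\g)$, $D:=\dom(\Delta)$). For each $g\in\supp(\omega)$, block-symmetry within $B_1$ and $B_2$ means that $g(t)$ depends on $t\in D^{2L+1}$ only through the multisets $\{t\}_{B_1}$ and $\{t\}_{B_2}$; consequently the prescription $(\alpha,\beta)\mapsto g(t)$ for any $t$ with $\{t\}_{B_1}=\alpha$ and $\{t\}_{B_2}=\beta$ yields a well-defined map $\hat g\colon\multiset{D}{L+1}\times\multiset{D}{L}\to C$. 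I then let $\chi$ be the push-forward of $\omega$ under $g\mapsto\hat g$, that is, $\chi(h):=\sum_{g\in\supp(\omega):\,\hat g=h}\omega(g)$. By Proposition~\ref{prop:prob}, $\supp(\omega)$ is countable and non-empty, so $\chi$ is a genuine discrete probability measure with countable, non-empty support.

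The core step is to verify the fractional homomorphism inequality~\eqref{eq:frachom} for $\chi$. Fix a $k$-ary $f\in\tau$ and $a:=((\alpha_1,\beta_1),\ldots,(\alpha_k,\beta_k))$. If $f^{\mathcal B^{2L+1}_{B_1,B_2}(\Delta)}(a)=+\infty$ there is nothing to prove; otherwise, choose $\tilde t^1,\ldots,\tilde t^k\in D^{2L+1}$ attaining the minimum that defines $f^{\mathcal B^{2L+1}_{B_1,B_2}(\Delta)}(a)$, and ``transpose'' them by setting $a^i:=(\tilde t^1_i,\ldots,\tilde t^k_i)\in D^k$ for $i\in[2L+1]$. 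Applying any $g\in\supp(\omega)$ componentwise to $(a^1,\ldots,a^{2L+1})$ yields precisely $(g(\tilde t^1),\ldots,g(\tilde t^k))=(\hat g(\alpha_1,\beta_1),\ldots,\hat g(\alpha_k,\beta_k))$. Inequality~\eqref{eq:promisefracpol} specialised to $f$ and to $a^1,\ldots,a^{2L+1}$ therefore reads
\[
\sum_{g\in\supp(\omega)}\omega(g)\,f^\g\bigl(\hat g(\alpha_1,\beta_1),\ldots,\hat g(\alpha_k,\beta_k)\bigr)\;\leq\;\frac{1}{2L+1}\sum_{i=1}^{2L+1}f^{\Delta}(a^i)\;=\;f^{\mathcal B^{2L+1}_{B_1,B_2}(\Delta)}(a),
\]
where the final equality uses the minimality of the chosen $\tilde t^\ell$. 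The leftmost sum regroups, along the fibres of $g\mapsto\hat g$, into $\mathbb E_{h\sim\chi}[f^\g(h(a))]$, which is exactly what~\eqref{eq:frachom} demands.

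The one real obstacle is the book-keeping between the two ``orthogonal'' indexings in play: the $(2L+1)$-ary polymorphism $g$ consumes $(2L+1)$ tuples each of length $k=\ar(f)$, while the bimultiset value $f^{\mathcal B^{2L+1}_{B_1,B_2}(\Delta)}(a)$ is assembled from $k$ tuples each of length $2L+1$; the transpose relation $a^i_\ell=\tilde t^\ell_i$ is exactly what enables the promise polymorphism inequality to slot into place against the optimal minimiser. A minor care point is that $\hat g$ is genuinely well-defined, which is immediate from block-symmetry, and that $\chi$ stays a probability measure on a countable support, which is immediate from Proposition~\ref{prop:prob}.
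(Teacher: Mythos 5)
Your proposal is correct and follows essentially the same route as the paper: both construct $\chi$ as the push-forward of $\omega$ along $g\mapsto\tilde g$ (well-defined by block-symmetry) and then invoke inequality~\eqref{eq:promisefracpol} with uniform $\omega_I$ against tuples realising the minimum defining $f^{\mathcal B^{2L+1}_{B_1,B_2}(\Delta)}$. The only cosmetic difference is that you fix the minimising tuples $\tilde t^1,\ldots,\tilde t^k$ up front, whereas the paper proves the bound for all block-preserving permutations of arbitrary representatives and then takes the minimum; the two orderings are interchangeable.
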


\begin{proof}  Let $C$ be the (possibly infinite) domain of $\g$, let $D$ be the finite domain of $\Delta$, and let $\tau$ be the common signature of $\g$ and $\Delta$.
  Let  $\omega$ be the $(2L+1)$-ary block-symmetric promise fractional polymorphism of $(\Delta,\g)$ with  symmetric blocks $B_1$ and $B_2$ of size $L+1$ and $L$, respectively. For every $g \in \supp(\omega)\subseteq C^{D^{2L+1}}$  we define $\tilde g\colon\multiset{D}{L+1}\times\multiset{D}{L}\to C$ by setting, for every $\alpha=\{\xi^j \mid j \in B_1\}\in \multiset{D}{L+1}$ and every $\beta=\{\xi^j \mid j \in B_2\}\in \multiset{D}{L}$, \[\tilde{g}((\alpha,\beta))=g(\xi^1,\xi^2,\ldots,\xi^{2L+1}).\] Observe that $\tilde{g}$ is well defined as $g$ is block-symmetric with symmetric blocks $B_1$ and $B_2$ (the order of the  coordinates from a same block does not matter). We define the discrete probability measure $\chi$ on $C^{\multiset{D}{L+1}\times\multiset{D}{L}}$ as  follows  \[\chi(Y)= \sum_{\substack{g \in \supp(\omega):\\  \tilde{g} \in Y}} \omega(g),\qquad\text{ for every } Y \subseteq \mmultiset{D}{L+1}\times\mmultiset{D}{L}.\]
	
	Observe that $\chi$ satisfies the countable additivity property, since
  $\omega$ does. Furthermore, $\supp(\chi)=\{h \in C^{D^{2L+1}} \mid  h=
  \tilde{g} \text{ for some } g \in \supp(\omega)\}=\{\tilde{g} \in C^{D^{(2L+1)}}
  \mid  g \in \supp(\omega)\}$ is countable by Proposition~\ref{prop:prob} and  it holds that
	\begin{align*}\sum_{h \in \supp(\chi)}\chi(h)= \sum_{g \in \supp(\omega)} \omega(g)=1.\end{align*}
	We claim that $\chi$ is a fractional homomorphism from ${\mathcal B^{2L+1}_{B_1,B_2}(\Delta})$ to $\g$.
	Indeed, for every $f \in \tau$ and every  tuple $((\alpha_1,\beta_1)\ldots,
  (\alpha_k,\beta_k))\in\left(\multiset{D}{L+1}\times\multiset{D}{L}\right)^k$ with $\alpha_i:=\{\xi^j_i\mid j \in B_1\}$, $\beta_i:=\{\xi^j_i\mid j \in B_2\}$, and $k:=\ar(f)$, it holds that 
	{ \begin{align}\nonumber
			& \sum_{h \in C^{\multiset{D}{L+1}\times\multiset{D}{L}}}\chi(h)f^{\g}(h((\alpha_1,\beta_1))\ldots,h((\alpha_k,\beta_k)))\\
      \nonumber= & \sum_{g \in \supp(\omega)}\omega(g) f^{\g}(g(\xi_1^1,\ldots,\xi_1^{2L+1}),\ldots,g(\xi_k^1,\ldots,\xi_k^{2L+1}))\\\label{eq:frhom12}
			=& \sum_{g \in \supp(\omega)}\omega(g)f^{\g}(g(\xi_1^{\pi_1(1)},\ldots,\xi_1^{\pi_1({2L+1})}),\ldots,g(\xi_k^{\pi_k(1)},\ldots,\xi_k^{\pi_k({2L+1})}))\\\label{eq:frhom22}
			\leq &\; \frac{1}{2L+1}\sum_{i=1}^{2L+1} f^{\Delta} (\xi^{\pi_1(i)}_1, \ldots, \xi^{\pi_k(i)}_k)
	\end{align}}
	for every $\pi_1, \ldots,\pi_k \in S_{2L+1}$ such that $\pi_i$ is permutation-invariant within $B_1$ and within $B_2$ for $1\leq i \leq k$. 
	Equality~(\ref{eq:frhom12}) holds because the maps $g \in \supp(\omega)$ are block-symmetric with symmetric blocks $B_1$ and $B_2$. Inequality~(\ref{eq:frhom22}) holds because $\omega$ is a promise fractional polymorphism of $(\Delta,\g)$.
	Then, in particular, we obtain  \begin{align*}
		&\sum_{h \in C^{\multiset{D}{L+1}\times\multiset{D}{L}}}\chi(h)f^{\g}(h((\alpha_1,\beta_1)\ldots,(\alpha_k,\beta_k)))\\\leq &   \frac{1}{2L+1}\min_{\substack{t^1,\ldots,t^k \in D^{2L+1}:\\ \{t^{\ell}\}_{B_1}=\alpha_{\ell},\{t^{\ell}\}_{B_2}=\beta_{\ell}}}\sum_{i=1}^{2L+1}f^{\Delta} (t_{i}^1, \ldots t_{i}^k)= f^{\mathcal{B}^{2L+1}_{B_1,B_2}(\Delta)}((\alpha_1,\beta_1),\ldots,(\alpha_k,\beta_k)).\qedhere
\end{align*}\end{proof}

\begin{remark}\label{rem:converse}
Although we do not need it for our main result, we note that the converse of Lemma~\ref{lemma:fhomhalfsym2} holds true. More precisely, let $(\Delta, \g)$ be a promise valued template such that $\Delta$ has a finite domain. Let $L\in \mathbb N$ and let $B_1\cup B_2$ be a partition of $[2L+1]$ such that $\lvert B_1\rvert =L+1$ and $\lvert B_2\rvert =L$. If $\mathcal{B}^{2L+1}_{B_1,B_2}(\Delta)$ is fractionally homomorphic to $\g$, then $(\Delta, \g)$ has a block-symmetric promise fractional polymorphism of arity $2L+1$ with symmetric blocks $B_1$ and $B_2$. 
 To show this we reason as in the proof of \cite[Lemma~2.2]{ThapperZivny2012}. Let $D$ be the domain of $\Delta$, and let $\chi$ be a fractional homomorphism from $\mathcal{B}^{2L+1}_{B_1,B_2}(\Delta)$ to $\g$. Let us define $h\colon D^{2L+1}\to \multiset{D}{L+1}\times\multiset{D}{L}$ such that it maps a tuple $a=(a_1,\ldots,a_m)\in D^{2L+1}$ to the bimultiset $(\{a\}_{B_1},\{a\}_{B_2})$. Then, it is easily verified that the discrete probability measure $\omega$ on $C^{D^m}$ such that \[\omega(g')=
\sum_{\substack{g \in \supp(\chi) \colon \\ g \circ h=g'}}\omega(g)\] is the desired promise fractional polymorphism.
\end{remark}

\begin{proof}[Proof of Theorem~\ref{thm:main}]
	Let $C$ be the (possibly infinite) domain of $\g$ and let $D$ be the finite domain of $\Delta$. Let $\tau$ be the common signature of $\Delta$ and $\g$.	
	Let $I$ be an instance of $\pvcsp(\Delta,\g)$ with variables $V=\{x_1, \ldots,x_n\}$, objective function $\phi(x_1,\ldots,x_n)=\sum_{j \in J}\gamma_j(x^j)$ where $J$ is a finite set of indices, $\gamma_j \in \g$, and $x^j \in V^{\ar(j)}$, and threshold $u$. 
	
Assume that $\min_D \phi^{\Delta}\leq u$. Our goal is to show that Algorithm~\ref{fig:alg} outputs \textsc{yes}. Since $\min_D \phi^{\Delta}\leq u$ we have $\blp(I, \Delta)\leq u$ and in particular we have that either $\blp(I, \Delta)< u$, which by linearity implies the existence of a relative interior point in the feasibility polytope of $\blp(I, \Delta)$ with value at most $u$; or  $\blp(I, \Delta)= u=\min_D \phi^{\Delta}$. In the first case, each coordinate of $(\lambda^\star,\mu^\star)$  is positive if and only if the same coordinate is positive at some point in the feasibility polytope of the BLP. Therefore, the feasibility lattice of $\aff^\star(I,\Delta)$ includes every possible assignment which is in the support of some feasible
solution to $\blp(I,\Delta)$, including integral solutions and as a consequence  $\aff^\star(I,\Delta) \leq \min_{D}\phi^{\Delta}\leq u$.
In the second case, each coordinate of $(\lambda^\star,\mu^\star)$  is positive if and only if the same coordinate is positive at some point in the optimal polytope of the BLP. Therefore, the feasibility lattice of $\aff^\star(I,\Delta)$ includes every possible assignment which is in the support of some optimal
solution to $\blp(I,\Delta)$, including integral solutions and as a consequence $\aff^\star(I,\Delta) \leq \min_{D}\phi^{\Delta}=u$. Thus, in both cases, $\blp(I,\Delta)\leq u$ and $\aff^\star(I,\Delta)\leq u$ and hence Algorithm~\ref{fig:alg} indeed outputs \textsc{yes}, as required.

In the other direction, we want to show (by contrapositive) that if Algorithm~\ref{fig:alg} outputs \textsc{yes} then $\inf_C\phi^{\g}\leq u$.
Thus, assume that $\blp(I,\Delta)\leq u$ and $\aff^\star(I,\Delta)\leq u$. 
Let $(\lambda^\star, \mu^\star)$ be as in Definition~\ref{def:refinement} and denote $\blp^\star(I,\Delta):=\sum_{j \in J} \sum_{ t \in D^{n_j}} \lambda^\star_{j}(t) f^{\Delta}_j(t)$; observe that $\blp^\star(I,\Delta)\leq u$ by the definition of $(\lambda^\star, \mu^\star)$.  
Let $(q^\star,r^\star)$ be a solution to $\aff^\star(I,\Delta)$ with objective value at most $u$. Let
  $\ell$ be a positive integer such that $\ell\cdot \lambda^\star$, and $\ell\cdot \mu^\star$ are both integral, and let $M$ be  the maximum of the absolute values of the coordinates of both $q^\star$ and $r^\star$.
	Let us set $L:=(M+1)\ell$. 
	From $\blp^\star(I,\Delta)\leq u$ and $\aff^\star(I,\Delta)\leq u$ it immediately follows
  that
  \begin{equation}\label{eq:halfsym12}\frac{2(M+1)\ell}{2(M+1)\ell+1}\blp^\star(I,\Delta)+\frac{1}{2(M+1)\ell+1}\aff^\star(I,\Delta)\leq u.\end{equation}
	We claim that 
  \begin{equation}\label{eq:halfsym22}
    \min_{\multiset{D}{L+1}\times\multiset{D}{L}} \phi^{\mathcal
    B^{2L+1}_{B_1,B_2}(\Delta)}\ \leq\ 
  \frac{2(M+1)\ell}{2(M+1)\ell+1}\blp^\star(I,\Delta)+\frac{1}{2(M+1)\ell+1}\aff^\star(I,\Delta)
  \end{equation}
	for all the partitions $B_1\cup B_2=\left[2L+1\right]$ such that $\vert B_1\rvert=L+1$ and $\vert B_2\rvert=L$.
	To prove the claim, let us  define, for every $i \in \{1,\ldots, n\}$ and for every $a \in D$, the following nonnegative integers  \begin{align*}
		& W_{x_i,B_1}(a):=(M+1)\ell \mu^\star_{x_i}(a)+r^\star_{x_i}(a),\\
		& W_{x_i,B_2}(a):=(M+1)\ell \mu^\star_{x_i}(a).
	\end{align*}
	(To check  that $ W_{x_i,B_1}(a)$ and $W_{x_i,B_2}(a)$ are nonnegative it is
  enough to observe that if $\mu^\star_{x_i}(a)$ is $0$ then, by Definition~\ref{def:refinement},
  $r^\star_{x_i}(a)$ is also $0$, otherwise $\mu^\star_{x_i}(a)$ is at least
  $\frac{1}{\ell}$, and  the positivity of $ W_{x_i,B_1}(a)$, $W_{x_i,B_2}(a)$ follows by the choice of $M$.)
	Observe that for every $i \in \{1,\ldots, n\}$ we have that
	\begin{align*}
		&\sum_{a \in D} W_{x_i,B_1}(a)=(M+1)\ell \sum_{a \in D}\mu^\star_{x_i}(a)+\sum_{a \in D}r^\star_{x_i}(a)=(M+1)\ell+1=L+1,\\
		& \sum_{a \in D} W_{x_i,B_2}(a)=(M+1)\ell \sum_{a \in D}\mu^\star_{x_i}(a)=(M+1)\ell=L.
	\end{align*}
	Let  $\nu\colon V 
	\to \multiset{D}{L+1}\times\multiset{D}{L}$ be the map defined,  for every $x_i \in V$,  by
	$\nu (x_i)=(\alpha_i, \beta_i),$ where $\alpha_i$ is the multiset of $ \multiset{D}{L+1}$ that contains $ W_{x_i,B_1}(a)$ many occurrences of $a$, for every $a \in D$, and $\beta_i$ is the multiset of $ \multiset{D}{L}$ that contains $ W_{x_i,B_2}(a)$ many occurrences of $a$, for every $a \in D$.

	Let $f_j$ be a $k$-ary function symbol appearing as a term of the objective function $\phi$.
	Let us define,  for every $t \in D^{k}$, the following nonnegative integers  \begin{align*}
		& P_{j,B_1}(t):=(M+1)\ell \lambda^\star_{j}(t)+q^\star_{j}(t),\\
		& P_{j,B_2}(t):=(M+1)\ell \lambda^\star_{j}(t).
	\end{align*}
	Observe that 
	\begin{align*}
		&\sum_{t \in D^{k}} P_{j,B_1}(t)=(M+1)\ell \sum_{t \in D^{k}}\lambda^\star_{j}(t)+\sum_{t \in D^{k}}q^\star_{j}(t)=(M+1)\ell+1=L+1,\\
		& \sum_{t \in D^{k}} P_{j,B_2}(t)=(M+1)\ell \sum_{t \in D^{k}}\lambda^\star_{j}(t)=(M+1)\ell=L.
	\end{align*}
	We write now \begin{align*}
		& \sum_{t \in D^{k}}P_{j,B_1}(t)f^{\Delta}_j(t)=\sum_{h=1}^{(M+1)\ell+1}f^{\Delta}_j(\zeta_1^h,\ldots, \zeta_k^h)\\
	\end{align*}
	where $\zeta^1,\ldots, \zeta^{(M+1)\ell+1}$ are defined to be $(M+1)\ell+1$ elements of $D^k$ such that $P_{j,B_1}(t)$ many of them are equal to $t$, for every $t \in D^k$;
	and 
	\begin{align*}
		& \sum_{t \in D^{k}}P_{j,B_2}(t)f^{\Delta}_j(t)=\sum_{h=1}^{(M+1)\ell}f^{\Delta}_j(\xi_1^h,\ldots, \xi_k^h)\\
	\end{align*}
	where $\xi^1,\ldots, \xi^{(M+1)\ell}$ are defined to be $(M+1)\ell$ elements of $D^k$ such that $P_{j,B_2}(t)$ many of them are equal to $t$, for every $t \in D^k$. 
	
	We obtain
  \begin{align*}&\frac{2(M+1)\ell}{2(M+1)\ell+1}\lambda^\star_j(t)f_j^{\Delta}(t)+\frac{1}{2(M+1)\ell+1}q
    ^\star_j(t)f_j^{\Delta}(t)\\=&\frac{1}{2(M+1)\ell+1}\left(\sum_{t \in D^{k}}P_{j,B_1}(t)f^{\Delta}_j(t)+\sum_{t \in D^{k}}P_{j,B_2}(t)f^{\Delta}_j(t)\right)\\
		=&\frac{1}{2L+1}\left(\sum_{h=1}^{L+1}f^{\Delta}_j(\zeta_1^h,\ldots, \zeta_k^h)+\sum_{h=1}^{L}f^{\Delta}_j(\xi_1^h,\ldots, \xi_k^h)\right)\\
		\geq&\frac{1}{2L+1}\min_{t^1,\ldots,t^k \in D^m: \{t^{\ell}\}_{B_1}=\zeta_{\ell}, \{t^{\ell}\}_{B_2}=\xi_{\ell}}\sum_{i=1}^{2L+1}f^{\Delta}(t^1_i,\ldots,t^k_i)\\=&f_j^{\mathcal B^{2L+1}_{B_1,B_2}(\Delta)}((\zeta_1,\xi_1),\ldots,(\zeta_k,\xi_k))=f_j^{\mathcal B^{2L+1}_{B_1,B_2}(\Delta)}(\nu(x^j_1),\ldots, \nu(x_k^j)),
	\end{align*}
	where the last equality follows because, for every $a \in D$, the number of $a$'s in $\zeta_h$ is 
	\begin{align*} &\sum_{ t \in D^{k}: t_h=a}P_{j,B_1}(t)=(M+1)\ell \sum_{ t \in
  D^{k}: t_h=a}\lambda^\star_j(t)+\sum_{ t \in D^{k}:
  t_h=a}q^\star_j(t)\\=&(M+1)\ell\mu^\star_{x_h^j}(a)+r^\star_{x_h^j}(a)=W_{x_h^j,B_1}(a),\end{align*}
	and, 
	for every $a \in D$, the number of $a$'s in $\xi_h$ is 
    \[\sum_{ t \in D^{k}: t_h=a}P_{j,B_2}(t)=(M+1)\ell \sum_{ t \in
  D^{k}: t_h=a}\lambda^\star_j(t)
  =(M+1)\ell\mu^\star_{x_h^j}(a)=W_{x_h^j,B_2}(a).\]
  This proves the claim.
	
	From Inequalities (\ref{eq:halfsym12}) and (\ref{eq:halfsym22}) it follows
  that for all partitions $B_1\cup B_2=\left[2L+1\right]$ such that $\vert
  B_1\rvert=L+1$ and $\vert B_2\rvert=L$ it holds
  \[\min_{\multiset{D}{L+1}\times\multiset{D}{L}} \phi^{\mathcal
  B^{2L+1}_{B_1,B_2}(\Delta)}\leq u.\]  Moreover, since there exists a
  block-symmetric promise fractional polymorphisms of $(\Delta,\g)$ of arity
  $2L+1$ having two symmetric blocks $B_1$ and $B_2$ with respective size $L+1$
  and $L$, Lemma~\ref{lemma:fhomhalfsym2} implies the existence of a fractional
  homomorphism from $\mathcal B^{2L+1}_{B_1,B_2}(\Delta)$ to $\g$. From Proposition \ref{prop:frachom} it follows that
  \[\inf_{C}\phi^{\g}\leq \min_{\multiset{D}{L+1}\times\multiset{D}{L}} \phi^{\mathcal B^{2L+1}_{B_1,B_2}(\Delta)}\leq u\]
	and this concludes the proof.
\end{proof}

We conclude the section with a couple of remarks showing that the number of
symmetric blocks and their size do not play a crucial role in the proofs of
Theorem~\ref{thm:main} and Lemma~\ref{lemma:fhomhalfsym2}. The notion of bimultiset-structure can be straightforwardly generalised to the notion of $k$\emph{-multiset-structure with blocks of size} $b_1,\ldots, b_k$.   

\begin{remark}
The proof of Lemma~\ref{lemma:fhomhalfsym2} can be easily adapted to the case in which the promise template $(\Delta, \g)$ has a block-symmetric promise fractional polymorphism $\omega$ of arity $L \in \mathbb N$ with an arbitrary number $k$ of symmetric blocks of size $b_1,\ldots,b_k$ (given that the blocks are the same for each map in $\supp(\omega)$). In this case, the $k$-multiset structure $\mathcal B_{B_1,\ldots,B_k}(\Delta)$ is fractionally homomorphic to $\g$.
The converse (corresponding to Remark~\ref{rem:converse}) is also true.
\end{remark}

\begin{remark}
Let  $(\Delta,\g)$ be a promise valued template such that  $\Delta$ has a \emph{finite domain}. Assume that $(\Delta,\g)$ has block-symmetric promise fractional polymorphisms with arbitrarily many blocks of arbitrarily large size. Then Algorithm~\ref{fig:alg}  correctly solves $\pvcsp(\Delta,\g)$. 
This statement can be proved by a slight modification of the proof of Theorem~\ref{thm:main} employing a block-symmetric promise fractional polymorphism whose symmetric blocks $B_1,\ldots,B_k$ have each size at least $M\ell^2$ and where the coefficients $W_{x_i,B_b(a)}$, and $P_{j,B_b}(t)$ are defined as in the proof of \cite[Theorem~4.1]{BrakensiekGWZ20}.
\end{remark}

Finally, we point out that for finite-domain PCSPs, the condition of having a block-symmetric promise polymorphism of arity $2L+1$ is equivalent to the one of having a block-symmetric promise polymorphism of arity $2L+1$ with two symmetric blocks of arity $L+1$ and $L$, for every $L \in \mathbb N$~\cite{BrakensiekGWZ20}.

\section{The Combined Relaxation with Sampling for PVCSPs}

We use the notion of a sampling algorithm for a valued structure from~\cite{PLVCSPsolvbyLP}.

\begin{definition}\label{def:valued-sampling}
	Let $\g$ be a valued structure with domain $C$ and finite signature $\tau$. A \emph{sampling algorithm} for $\g$ takes as input a positive integer $d$  and computes a  finite-domain valued  structure  $\Delta$  fractionally homomorphic to $\g$ 
	such that, for every finite sum $\phi$ of  $\tau$-terms having at most $d$ distinct variables, $V=\{x_1,\ldots,x_d\}$, and every $u \in \Q$,  there exists a solution $s\colon V \to C$ with $\phi^{\g} (s(x_1),\ldots, s(x_d))\leq u$ if and only if there exists a solution $h'\colon V \to D$ with $\phi^{\Delta}(h'(x_1),\ldots, h'(x_d))\leq u$.
	A sampling algorithm is called \emph{efficient} if its running time is bounded by a polynomial in $d$.
	The finite-domain valued  structure computed by a sampling algorithm is called a \emph{sample}.
\end{definition}

\begin{example}\label{ex:samplingplh}
	A valued structure $\g$ with domain $\Q$ and signature $\tau$ is called \emph{piecewise linear homogeneous (PLH)} if, for every $\gamma \in \tau$, the cost function $\gamma^{\g}$ is first-order definable over the structure $\mathfrak {L}=(\Q;\leq,1,\{c\cdot\}_{c \in \Q})$ where 
	\begin{itemize}
    \item $<$ is a relation symbol (i.e., a $\{0,\infty\}$-valued function symbol) of arity $2$
		and $<^{\mathfrak L}$ is the strict linear order of $\Q$,
		\item $1$ is a constant symbol and $1^{\mathfrak L} := 1 \in \Q$, and 
		\item $c \cdot$ is a unary function symbol for every $c \in {\Q}$ such that $(c \cdot)^{\mathfrak L}$ is the function $x \mapsto cx$, i.e., the multiplication by $c$.
	\end{itemize} 
	If $\g$ is a PLH valued structure with a finite signature, then it admits an efficient sampling algorithm~\cite{PLVCSPsolvbyLP}.
\end{example}

Let $\g$ be a valued structure with a finite signature that admits an efficient sampling algorithm. Observe that for every finite-domain valued structure $\Delta_d$ computed by such an efficient sampling algorithm, the pair $(\Delta_d,\g)$ is a promise valued template. The following lemma  follows from the definition of a sampling algorithm for a valued structure.

\begin{lemma}\label{lemma:samplingvspromise}
	Let $(\g_1,\g_2)$ be a promise valued template with a finite signature.
  Assume that $\Gamma_1$ admits an efficient sampling algorithm.
  If  $\pvcsp(\Delta_d,\g_2)$ is polynomial-time solvable for every finite-domain
  valued structure $\Delta_d$ computed by  an efficient sampling algorithm for
  $\g_1$, then $\pvcsp(\g_1,\g_2)$ is polynomial-time solvable.
\end{lemma}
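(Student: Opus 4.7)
The plan is to reduce $\pvcsp(\g_1,\g_2)$ to $\pvcsp(\Delta_d,\g_2)$ for a sample $\Delta_d$ produced by the efficient sampling algorithm, and then invoke the hypothesis. Given an instance $I=(V,\phi,u)$ of $\pvcsp(\g_1,\g_2)$ with $d:=\lvert V\rvert$, I first run the efficient sampling algorithm on input $d$; this produces, in time polynomial in $d$, a finite-domain valued structure $\Delta_d$ (over the same signature $\tau$) satisfying $\Delta_d\to_f\g_1$ and enjoying the ``iff'' preservation of Definition~\ref{def:valued-sampling} for sums of $\tau$-terms in at most $d$ variables.

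The next step is to certify that $(\Delta_d,\g_2)$ is itself a promise valued template, so that the hypothesis of the lemma applies to it. Since $(\g_1,\g_2)$ is a promise valued template we have $\g_1\to_f\g_2$, and we already have $\Delta_d\to_f\g_1$; composing the two fractional homomorphisms yields $\Delta_d\to_f\g_2$. Concretely, if $\chi_1$ on $\dom(\g_1)^{\dom(\Delta_d)}$ and $\chi_2$ on $\dom(\g_2)^{\dom(\g_1)}$ are the witnessing discrete probability measures, define $\chi$ on $\dom(\g_2)^{\dom(\Delta_d)}$ by $\chi(k):=\sum_{(h_1,h_2):h_2\circ h_1=k}\chi_1(h_1)\chi_2(h_2)$; Proposition~\ref{prop:prob} keeps both supports countable, so $\chi$ is a well-defined discrete probability measure, and applying the expectation inequality~\eqref{eq:frachom} of $\chi_2$ inside the sum over $h_1$ and then that of $\chi_1$ verifies the fractional homomorphism condition for $\chi$.

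Having established that $(\Delta_d,\g_2)$ is a promise valued template, I apply the assumed polynomial-time algorithm for $\pvcsp(\Delta_d,\g_2)$ to $I$ (which is a valid instance since $\Delta_d$, $\g_1$, $\g_2$ share the finite signature $\tau$) and return its answer. Correctness follows from the two directions: if $I$ is a \textsc{yes}-instance of $\pvcsp(\g_1,\g_2)$, witnessed by some $s\colon V\to\dom(\g_1)$ with $\phi^{\g_1}(s)\leq u$, the ``iff'' of Definition~\ref{def:valued-sampling} yields an assignment $h\colon V\to\dom(\Delta_d)$ with $\phi^{\Delta_d}(h)\leq u$, so the algorithm must output \textsc{yes}; conversely, the \textsc{no}-condition of $\pvcsp(\g_1,\g_2)$ is literally identical to that of $\pvcsp(\Delta_d,\g_2)$, since both speak only about assignments into $\dom(\g_2)$ and the cost function $\phi^{\g_2}$. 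The total running time is polynomial in $\lvert I\rvert$ because $\lvert\Delta_d\rvert$ is polynomial in $d\leq\lvert I\rvert$. I do not foresee any real obstacle; the only point that needs care is the countability bookkeeping when composing the two fractional homomorphisms through the possibly uncountable intermediate domain $\dom(\g_1)$, and Proposition~\ref{prop:prob} handles that immediately.
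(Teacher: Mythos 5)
Your proposal is correct and follows essentially the same route as the paper: run the efficient sampling algorithm on $d=\lvert V\rvert$, certify that $(\Delta_d,\g_2)$ is a promise valued template by composing the fractional homomorphisms $\Delta_d\to_f\g_1\to_f\g_2$, and return the answer of the assumed solver for $\pvcsp(\Delta_d,\g_2)$, with the \textsc{yes} direction justified by the sampling definition and the \textsc{no} condition being identical for both problems. Your write-up is in fact slightly more explicit than the paper's (which defers the composition of fractional homomorphisms to the proof of Theorem~\ref{thm:main2}), but the argument is the same.
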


\begin{proof}  

It can be verified that $(\Delta_d,\g_2)$ is a valid promise valued template, for every $d \in \mathbb N$; that is, the existence of a fractional homomorphism from $\Delta_d$ to $\g_1$ and a fractional homomorphism from $\g_1$ to $\g_2$ implies the existence of a fractional homomorphism from $\Delta_d$ to $\g_2$, which is obtained by a composition of the two fractional homomorphisms in the same fashion as in the proof of Theorem~\ref{thm:main2}.

Let $I:=(V,\phi,u)$ be an instance of	$\pvcsp(\g_1,\g_2)$ and let $d:=\lvert V \rvert$. Let $C_1,C_2$ and $D_d$ be the domains of $\g_1, \g_2$ and $\Delta_d$, respectively. Assume that there is a polynomial-time algorithm solving $\pvcsp(\Delta_d,\g_2)$. If the output of such an algorithm is $\textsc{no}$ then clearly the answer to $\pvcsp(\g_1,\g_2)$ is $\textsc{no}$.
Otherwise, if the output of such an algorithm is $\textsc{yes}$, then by the definition of sampling algorithm the answer to  $\pvcsp(\g_1,\g_2)$ is $\textsc{yes}$.
\end{proof}

\begin{theorem}\label{thm:main2}
	Let $(\g_1,\g_2)$ be a promise valued template with a finite signature. 
  Assume that $\g_1$ admits an efficient sampling algorithm. 
  Moreover, assume that $(\g_1,\g_2)$ has a block-symmetric promise
  fractional polymorphism of arity $2L+1$ with two symmetric blocks of size
  $L+1$ and $L$, respectively, for all $L \in \N$. Then $\pvcsp(\g_1,\g_2)$ is
  polynomial-time solvable. 
\end{theorem}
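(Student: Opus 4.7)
The plan is to reduce to the finite-domain setting covered by Theorem~\ref{thm:main} via the sampling algorithm. By Lemma~\ref{lemma:samplingvspromise}, it suffices to show that $\pvcsp(\Delta_d,\g_2)$ is polynomial-time solvable for every finite-domain valued structure $\Delta_d$ produced by the efficient sampling algorithm for $\g_1$. Since each such $\Delta_d$ already has a finite domain, we only need to verify the two hypotheses of Theorem~\ref{thm:main} for the pair $(\Delta_d,\g_2)$: that it is a promise valued template, and that for every $L\in\N$ it admits a block-symmetric promise fractional polymorphism of arity $2L+1$ with symmetric blocks of sizes $L+1$ and $L$.

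Both hypotheses will follow from a single composition construction. Let $\chi$ denote the fractional homomorphism from $\Delta_d$ to $\g_1$ furnished by the sampling algorithm, and let $\eta$ denote a fractional homomorphism from $\g_1$ to $\g_2$ (witnessing that $(\g_1,\g_2)$ is a promise valued template). Composing them in the standard way---drawing $h\sim\chi$ and $k\sim\eta$ independently and outputting $k\circ h$---yields a fractional homomorphism from $\Delta_d$ to $\g_2$, so $(\Delta_d,\g_2)$ is a promise valued template.

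For the polymorphism, fix $L\in\N$ and let $\omega=(\omega_I,\omega_O)$ be a block-symmetric promise fractional polymorphism of $(\g_1,\g_2)$ of arity $2L+1$ with blocks $B_1,B_2$ of sizes $L+1$ and $L$; by Lemma~\ref{lemma:weightsyminput} we may assume $\omega_I(e_i^{(2L+1)})=\frac{1}{2L+1}$. Draw $g\sim\omega_O$ and $h\sim\chi$ independently and form the $(2L+1)$-ary operation $g\ast h\colon D_d^{2L+1}\to C_2$ defined by $(g\ast h)(x_1,\ldots,x_{2L+1}):=g(h(x_1),\ldots,h(x_{2L+1}))$; let $\omega'_O$ be the induced distribution on $C_2^{D_d^{2L+1}}$, whose support is countable by Proposition~\ref{prop:prob}. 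Using a single shared $h$ across all $2L+1$ coordinates makes each $g\ast h$ permutation-invariant within $B_1$ and within $B_2$ (since $g$ is), so $\omega':=(\omega'_I,\omega'_O)$ with $\omega'_I(e_i^{(2L+1)}):=\frac{1}{2L+1}$ has the required block structure. To verify the promise polymorphism inequality, fix $f\in\tau$ of arity $k$ and $a^1,\ldots,a^{2L+1}\in D_d^k$: summing over $g$ with $h$ fixed and applying the inequality for $\omega$ to the $C_1$-tuples $h(a^1),\ldots,h(a^{2L+1})$ yields the intermediate bound $\frac{1}{2L+1}\sum_i f^{\g_1}(h(a^i))$, after which summing over $h$ against $\chi$ and invoking the fractional-homomorphism inequality $\sum_h\chi(h)f^{\g_1}(h(a^i))\leq f^{\Delta_d}(a^i)$ coordinate-wise for each $a^i$ gives the required bound $\frac{1}{2L+1}\sum_i f^{\Delta_d}(a^i)$.

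The main obstacle is carrying out this composition while preserving both block-symmetry and the polymorphism inequality at once: block-symmetry forces using the same $h$ at every coordinate of $g$ (independent $h_i$'s would break invariance within blocks), and once that choice is made the inequality reduces to a routine two-step expectation argument as above. Once this step is in place, Theorem~\ref{thm:main} applies to $(\Delta_d,\g_2)$ for every sample $\Delta_d$, and Lemma~\ref{lemma:samplingvspromise} delivers polynomial-time solvability of $\pvcsp(\g_1,\g_2)$.
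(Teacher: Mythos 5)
Your proposal is correct and follows essentially the same route as the paper: reduce to the samples via Lemma~\ref{lemma:samplingvspromise}, then push the block-symmetric promise fractional polymorphism of $(\g_1,\g_2)$ down to $(\Delta_d,\g_2)$ by composing each $g\in\supp(\omega_O)$ with a single shared $h\sim\chi$, and verify the inequality by the same two-step expectation argument before invoking Theorem~\ref{thm:main}. The only differences are cosmetic (the paper additionally cites Mertens' theorem to justify the product measure being well defined, a routine detail your independent-draw formulation also covers).
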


Note that in Theorem~\ref{thm:main2} (and in Lemma~\ref{lemma:samplingvspromise}) {\it both} the respective domains of the valued structures $\g_1$ and $\g_2$ have arbitrary (countable) cardinality, that is, each of $\g_1$ and $\g_2$ can have a finite or an infinite domain.

\begin{proof}
For every positive integer $d$ let $\Delta_d$ be the finite-domain valued structure $\Delta_d$ computed by  an efficient sampling algorithm for $\g_1$ on input $d$. 
  By Lemma~\ref{lemma:samplingvspromise}, to prove that $\pvcsp(\g_1,\g_2)$ is
  polynomial-time solvable it is enough to prove that $\pvcsp(\Delta_d,\g_2)$ is
  polynomial-time solvable for every positive integer $d$.
We claim that for every $d, L \in \N$  there exists a block-symmetric promise
  fractional polymorphism of $(\Delta_d,\g_2)$ with arity $2L+1$ having two
  symmetric blocks of size $L+1$ and $L$, respectively. By Theorem~\ref{thm:main}, 
  proving our claim will imply that
  $\pvcsp(\Delta_d, \g_2)$ is polynomial-time solvable and therefore, by
  Lemma~\ref{lemma:samplingvspromise}, that $\pvcsp(\g_1,\g_2)$ is polynomial-time
  solvable. 

We now prove the claim. Let $C_1,C_2$ and $D_d$ be the domains of $\g_1, \g_2$ and $\Delta_d$, respectively, and let $\tau$ be the signature of $\g_1$ and $\g_2$ (by the definition of  sampling algorithm it is also the signature of $\Delta_d$).
By the definition of sampling algorithm, there exits a fractional homomorphism $\chi$ from $\Delta_d$ to $\g_1$ and, by assumption, there exists a block-symmetric promise fractional polymorphism $\omega:=(\omega_I,\omega_O)$ of $(\g_1,\g_2)$ with two symmetric blocks $B_1$ and $B_2$ having size $L+1$ and $L$, respectively. For every $h \in \supp(\chi)\subseteq C_1^{D_d}$ and every $g \in \supp(\omega_O)\subseteq C_2^{C_1^{2L+1}}$ we define the map $g \circ h \colon D_d^{2L+1}\to C_2$ by setting, for every $a^1, \ldots,a^m \in D_d^{2L+1}$, \[g \circ h (a^1,\ldots,a^{2L+1})=g\left(h(a^1), \ldots, h(a^{2L+1})\right).\]
Observe that $g \circ h$ is a block-symmetric map with symmetric block $B_1$ and $B_2$, because so is $g$. We define the discrete probability measure $\mu_O$ on $C_2^{D_d^{2L+1}}$ as follows
\[\mu_O(Y)= \sum_{h \in \supp(\chi)} \sum_{\substack{g \in \supp(\omega_O):\\  g \circ h\in Y}} \chi(h)\omega_O(g),\qquad\text{ for every } Y \subseteq D_d^{2L+1}.\]
Since $\sum_{h \in \supp(\chi)}\chi(h)$ and $\sum_{g \in \supp(\omega_O)} \omega_O(g)$ are absolutely convergent series, it follows  by Martens' Theorem (see, e.g.,~\cite[Theorem 6.57]{laczkovich2017real}) that $\mu_O$ is well defined.
Observe that $\mu_O$ satisfies the countable additivity property, since $\omega_O$ and $\chi$ do. Furthermore, $\supp(\mu_O)=\{(g\circ h)\mid h \in \supp(\chi),  g \in \supp(\omega_O)\}$ is countable  and, again by Martens' Theorem,  it holds
\begin{align*}\sum_{g' \in \supp(\mu_O)}\mu_O(g')= \sum_{h \in \supp(\chi)}\chi(h)\sum_{g \in \supp(\omega_O)} \omega_O(g)=1.\end{align*}
Let $\mu_I$ be the discrete probability measure in $\mathcal J_D^{(2L+1)}$ such
  that $\mu_I(e_i^{(2L+1)})=\frac{1}{2L+1}$. We now show that
  $\mu:=(\mu_I,\mu_O)$ is a promise fractional polymorphism of $(\Delta_d, \g_1)$
  with arity $2L+1$. By definition, $\mu$ is block-symmetric with two symmetric blocks of respective size $L+1$ and $L$.

For every $f \in \tau$ and every $a^1, \ldots, a^{2L+1} \in D_d^{\ar(f)}$ it holds that 
\begin{align}\nonumber
&\sum_{ g' \in \supp(\mu_O)}\mu_O(g')f^{\g_2}\left(g(a^1,\ldots, a^{2L+1})\right)\\\nonumber=&\sum_{h \in \supp(\chi)}\chi(h)\sum_{g \in \supp(\omega_O)}\omega_O(g)f^{\g_2}\left(g(h(a^1), \ldots,h(a^{2L+1})\right)\\
\label{eq:row3}\leq &\sum_{h \in \supp(\chi)}\chi(h) \frac{1}{2L+1}\sum_{i=1}^{2L+1}f^{\g_1}\left(g(h(a^i)) \right)\\
\nonumber = & \frac{1}{2L+1} \sum_{i=1}^{2L+1} \sum_{h \in \supp(\chi)}\chi(h)f^{\g_1}\left(h(a^i)\right)\\
\label{eq:row4}\leq & \frac{1}{2L+1}\sum_{i=1}^{2L+1}f^{\Delta_d}(a^i),
\end{align}
where Inequality~(\ref{eq:row3}) holds  because $\omega$ is a promise fractional
  polymorphism of $(\g_1,\g_2)$ and Inequality~(\ref{eq:row4}) holds  because
  $\chi$ is a fractional homomorphism from $\Delta_d$ to $\g_1$.
We obtained that $\mu$ is a promise fractional polymorphism of $(\Delta_d, \g_2)$ and this concludes the proof.
\end{proof}

In the particular case in which $\g_1=\g_2$, from Theorem~\ref{thm:main2} we obtain the following result for infinite-domain (non-promise) VCSPs.

\begin{corollary}\label{cor:main2}
	Let $\g$ be a valued structure with a finite signature that admits an
	efficient sampling algorithm. Assume that $\g$ has a block-symmetric
	fractional polymorphism of arity $2L+1$ with two symmetric blocks of size
	$L+1$ and $L$, respectively, for all $L \in \N$. Then $\vcsp(\g)$ is
	polynomial-time solvable. 
\end{corollary}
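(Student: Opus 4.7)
The plan is to derive the corollary as an immediate specialisation of Theorem~\ref{thm:main2} to the case $\g_1 = \g_2 = \g$. First I would recall that every valued structure is fractionally homomorphic to itself, e.g.~via the degenerate probability measure concentrated on the identity map on $\dom(\g)$; hence $(\g,\g)$ is a valid promise valued template and $\vcsp(\g) = \pvcsp(\g,\g)$, as already observed in Section~\ref{sec:prelims}. Since $\g$ has a finite signature and admits an efficient sampling algorithm by hypothesis, the template $(\g,\g)$ satisfies the first two assumptions of Theorem~\ref{thm:main2}.

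It remains to verify the third assumption, namely the existence, for every $L\in\N$, of a block-symmetric promise fractional polymorphism of $(\g,\g)$ of arity $2L+1$ with two symmetric blocks of sizes $L+1$ and $L$. Given such a block-symmetric fractional polymorphism $\omega$ of $\g$ (in the sense of Subsection~2.2), I would apply the construction from the remark following the definition of promise fractional polymorphism: define $\mu_O := \omega$ and take $\mu_I$ to be the uniform distribution on $\mathcal J^{(2L+1)}_{\dom(\g)}$, that is $\mu_I(e_i^{(2L+1)}) = \frac{1}{2L+1}$. The inequality defining a promise fractional polymorphism reduces in this case to the defining inequality of a fractional polymorphism of $\g$, so $\mu = (\mu_I, \mu_O)$ is indeed a promise fractional polymorphism of $(\g,\g)$. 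Block-symmetry with blocks $B_1, B_2$ of sizes $L+1$ and $L$ is preserved because $\supp(\mu_O) = \supp(\omega)$ consists of maps permutation-invariant within each $B_j$, and $\sum_{i \in B_j} \mu_I(e_i^{(2L+1)}) = |B_j|/(2L+1)$ holds automatically by the uniform choice of $\mu_I$.

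Having established the three hypotheses, I would conclude by invoking Theorem~\ref{thm:main2}, which yields polynomial-time solvability of $\pvcsp(\g,\g) = \vcsp(\g)$. There is essentially no obstacle: the argument is a direct specialisation, and the only point worth articulating carefully is the compatibility between the two notions of fractional polymorphism (non-promise vs.\ promise) and the fact that this translation preserves the block-symmetric structure with the prescribed block sizes.
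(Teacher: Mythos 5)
Your proposal is correct and matches the paper's own argument: the corollary is obtained by setting $\g_1=\g_2=\g$ in Theorem~\ref{thm:main2}, using the facts that $\vcsp(\g)=\pvcsp(\g,\g)$ and that a block-symmetric fractional polymorphism of $\g$ yields a block-symmetric promise fractional polymorphism of $(\g,\g)$ via the uniform choice of $\omega_I$ (as in the remark following the definition of promise fractional polymorphisms). Your explicit check that the uniform $\mu_I$ satisfies the block condition $\sum_{i\in B_j}\mu_I(e_i^{(2L+1)})=\lvert B_j\rvert/(2L+1)$ is the only detail the paper leaves implicit.
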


Corollary~\ref{cor:main2} and the existence of an efficient sampling algorithm for PLH valued structures (see Example~\ref{ex:samplingplh}) imply the following tractability result for $(\wma, \avg)$-convex PLH valued structures.

\begin{corollary}Let $\wma$ be the $2$-period centred weighted  moving average.
  Let $\g$ be a PLH valued structure with a finite signature such that
  every cost function from $\g$ is a $(\wma, \avg)$-convex cost function (see
  Example \ref{ex:movingaverage2}). Then $\vcsp(\g)$ is polynomial-time solvable. \qedhere
\end{corollary}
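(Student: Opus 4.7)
The plan is to verify that $\g$ satisfies both hypotheses of Corollary~\ref{cor:main2}, from which the tractability statement follows immediately. Since $\g$ is assumed to be a PLH valued structure with a finite signature, the first hypothesis is handed to us directly by Example~\ref{ex:samplingplh}: PLH structures admit an efficient sampling algorithm. So the only real task is to produce, for every $L \in \N$, a block-symmetric fractional polymorphism of $\g$ of arity $2L+1$ whose two symmetric blocks have sizes $L+1$ and $L$.

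For this I would combine Example~\ref{ex:movingaverage2} with the block decomposition already spelled out in Example~\ref{ex:movingaverage1}. Given $L \in \N$, set $m := L$ and consider the arity $k := 2L+1 = 2m+1$. By Example~\ref{ex:movingaverage2}, the discrete probability measure $\omega^{(2L+1)}$ concentrated on the single operation $\wma^{(2L+1)}$ is a fractional polymorphism of $\g$, since by hypothesis every cost function of $\g$ is $(\wma,\avg)$-convex. By Example~\ref{ex:movingaverage1}, the operation $\wma^{(2L+1)}$ is block-symmetric with symmetric blocks
\[
B_1 = \{1,\dots,\lfloor (2L+1)/4 \rfloor\}\cup\{\lfloor 3(2L+1)/4\rfloor+1,\dots,2L+1\}, \qquad B_2 = \{\lfloor(2L+1)/4\rfloor+1,\dots,\lfloor 3(2L+1)/4\rfloor\},
\]
and plugging $k=2L+1$ into the size formulas given there yields $\lvert B_1\rvert = \lfloor (2L+1)/2\rfloor = L$ and $\lvert B_2\rvert = \lceil (2L+1)/2\rceil = L+1$. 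Relabelling so that the larger block is $B_1$, this is exactly the block pattern required by Corollary~\ref{cor:main2}.

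Having produced such a fractional polymorphism for every $L \in \N$, I would then just invoke Corollary~\ref{cor:main2} to conclude that $\vcsp(\g)$ is polynomial-time solvable. The only point that deserves any care is the size computation for the blocks of $\wma^{(2L+1)}$; it is a direct arithmetic check but is the single place where one must confirm that the moving-average construction of Example~\ref{ex:movingaverage1} outputs precisely two blocks of sizes $L$ and $L+1$ (rather than, say, requiring $2L+1$ to be of a particular residue class). Since the formulas $\lfloor (2L+1)/2\rfloor$ and $\lceil (2L+1)/2\rceil$ give $L$ and $L+1$ for every $L$, no case analysis on the parity of $L$ is needed, and the proof is complete.
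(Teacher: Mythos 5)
Your proposal is correct and matches the paper's own argument: the corollary is obtained by checking the two hypotheses of Corollary~\ref{cor:main2}, using Example~\ref{ex:samplingplh} for the efficient sampling algorithm and Examples~\ref{ex:movingaverage1} and~\ref{ex:movingaverage2} for the $(2L+1)$-ary block-symmetric fractional polymorphism with blocks of sizes $L+1$ and $L$. Your explicit check of the block sizes (and the harmless relabelling of which block is the larger one) is exactly the arithmetic the paper leaves implicit.
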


\bibliographystyle{plainurl}
\bibliography{vz21}

\appendix

\section{Omitted Proofs from Section~\ref{sec:prelims}}
\label{app:prelims}

Let $\mu$ be a discrete probability measure as defined in
Section~\ref{sec:prelims}. We note that countable additivity implies that
$\mu(\emptyset)=0$ and that $\mu(X_1)\leq \mu(X_2)$ for every $X_1\subseteq
X_2$, i.e., $\mu$ is monotone. We also remark that while the set $X$ might be
countable or uncountable, the terminology \emph{discrete probability measure}
refers to the fact that the $\sigma$-algebra is $\mathcal P(X)$, i.e., the
discrete topology on $X$ (see~\cite[Examples 2.8 and 2.9]{Loasve1977}).

\begin{proposition*}[Proposition~\ref{prop:prob} restated]
Let $\mu$ be a discrete probability measure on a  set $X$. Then  $\supp(\mu)$  is a countable  set. Furthermore, if $X$ is countable then $\sum_{x \in \supp(\mu)}\mu(x)=1$; that is, $\supp(\mu)$ is non-empty.
\end{proposition*}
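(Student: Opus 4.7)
The plan is to prove the two statements separately, with the countability of $\supp(\mu)$ following from a standard layer-cake argument and the summation identity following from countable additivity applied to a specific partition of $X$.

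For the first part, I would define, for each $n \in \N$, the ``heavy atom'' set $S_n := \{x \in X \mid \mu(\{x\}) > 1/n\}$, and observe that $\supp(\mu) = \bigcup_{n \in \N} S_n$. The main step is to show that each $S_n$ is finite: if some $S_n$ contained infinitely many elements, one could pick a countably infinite sequence of pairwise distinct points $x_1,x_2,\ldots \in S_n$, and by countable additivity applied to the pairwise disjoint family $\{\{x_i\}\}_{i \in \N}$ one would get
\[
1 = \mu(X) \;\geq\; \mu\Bigl(\bigcup_{i \in \N}\{x_i\}\Bigr) \;=\; \sum_{i \in \N}\mu(\{x_i\}) \;>\; \sum_{i \in \N}\frac{1}{n} \;=\; +\infty,
\]
which is a contradiction. (In fact one gets $|S_n| < n$.) Therefore $\supp(\mu)$ is a countable union of finite sets, hence countable.

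For the second part, assume that $X$ is itself countable. Then $X \setminus \supp(\mu)$ is countable, and every $x \in X \setminus \supp(\mu)$ satisfies $\mu(\{x\}) = 0$ by definition of the support. Writing $X$ as the disjoint union $\supp(\mu) \sqcup (X \setminus \supp(\mu))$ and applying countable additivity to each piece (which is legitimate since $\supp(\mu)$ is countable by the first part) gives
\[
1 \;=\; \mu(X) \;=\; \sum_{x \in \supp(\mu)}\mu(\{x\}) \;+\; \sum_{x \in X \setminus \supp(\mu)}\mu(\{x\}) \;=\; \sum_{x \in \supp(\mu)}\mu(\{x\}) \;+\; 0.
\]
This yields the required identity $\sum_{x \in \supp(\mu)}\mu(x) = 1$, and in particular forces $\supp(\mu) \neq \emptyset$ (since an empty sum equals $0$, not $1$).

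There is no real obstacle here; the only subtlety worth flagging is that one must justify the use of countable additivity on $\supp(\mu)$, which is exactly why the countability statement is proved first. The argument does not require $X$ itself to be countable for the first assertion, only for the summation identity in the second.
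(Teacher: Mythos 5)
Your proof is correct and, for the countability claim (the substance of the proposition), it follows essentially the same argument as the paper: the sets $S_n=\{x\mid\mu(\{x\})>1/n\}$ play the role of the paper's $A_n=\{x\mid\mu(\{x\})\geq 1/n\}$, each shown finite via countable additivity, so $\supp(\mu)$ is a countable union of finite sets. The only difference is that you also spell out the second claim ($\sum_{x\in\supp(\mu)}\mu(x)=1$ when $X$ is countable) via the decomposition $X=\supp(\mu)\sqcup(X\setminus\supp(\mu))$, which the paper's proof leaves implicit; your justification of it is fine.
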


\begin{proof}
	Let us define, for all $n \in \N$, the subset $A_n:=\left\{x \in X \mid \mu(\{x\})\geq \frac{1}{n}\right\}$. We can write the support of $\mu$ as $\supp(\mu)=\bigcup_{n \in \N}A_n$. Observe that, for every $n \in \N$, the set $A_n$ is finite. If this was not the case, then there would exist a sequence $(x_i)_{i \in \N}\in (A_n)^{\N}$ of pairwise distinct elements and, by the countable additivity property, we would obtain that $\mu (\{ x_i \mid i \in \N \} )$ $=\sum_{i \in \N}\mu(\{x_i\})\geq \sum_{i \in \N}\frac{1}{n}=+\infty$, which contradicts the assumption that
	$\mu (\{x_i \mid i \in \N\})\leq \mu(X)=1$. Therefore, $\supp(\mu)$ is countable, because it is the countable union of finite sets.
\end{proof}

\begin{proposition*}[Proposition~\ref{prop:frachom} restated]
	Let $\g$ and $\Delta$ be valued structures over the same signature $\tau$ with domains $C$ and $D$, respectively. Assume $\Delta \to_f \g$. Let $V=\{v_1,\ldots,v_n\}$ be a set of variables and $\phi$ a sum of finitely many $\tau$-terms with variables from $V$. For every $u\in \Q$, if there exists an assignment $s \colon V \to D$ such that $\phi^{\Delta}(s(v_1, \ldots, s(v_n)) \leq u$, then there exists an assignment $s'\colon V \to C$ such that $\phi^{\g}(s'(v_1), \ldots, s'(v_n)) \leq u$. In particular, it holds that \[\inf_{C} \phi^{\g}\leq \inf_{D} \phi^{\Delta}.\] 
\end{proposition*}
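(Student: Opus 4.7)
The plan is to lift the pointwise fractional-homomorphism inequality to the objective function $\phi$ by linearity of expectation, and then use a probabilistic/averaging argument to extract a single deterministic assignment $s'\colon V\to C$ achieving cost at most $u$.

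More precisely, I would start from the given assignment $s\colon V\to D$ with $\phi^\Delta(s(v_1),\dots,s(v_n))\leq u$, and consider, for $h$ drawn from the fractional homomorphism $\chi$ on $C^D$, the random assignment $h\circ s\colon V\to C$. For each $\tau$-term $f_i(v^i_1,\dots,v^i_{\ar(f_i)})$ appearing in $\phi$, the defining inequality \eqref{eq:frachom} applied to the tuple $a=(s(v^i_1),\dots,s(v^i_{\ar(f_i)}))\in D^{\ar(f_i)}$ gives
\[
\mathbb E_{h\sim\chi}\bigl[f_i^{\g}(h(s(v^i_1)),\dots,h(s(v^i_{\ar(f_i)})))\bigr]\ \leq\ f_i^{\Delta}(s(v^i_1),\dots,s(v^i_{\ar(f_i)})).
\]
Since $\phi$ is a \emph{finite} sum of such terms, summing these inequalities and using linearity of expectation yields
\[
\mathbb E_{h\sim\chi}\bigl[\phi^{\g}(h(s(v_1)),\dots,h(s(v_n)))\bigr]\ \leq\ \phi^{\Delta}(s(v_1),\dots,s(v_n))\ \leq\ u.
\]

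Next I would invoke Proposition~\ref{prop:prob} to ensure that $\supp(\chi)$ is countable and that the probabilities $\chi(h)$, $h\in\supp(\chi)$, sum to $1$; this lets me rewrite the expectation as the honest countable series $\sum_{h\in\supp(\chi)}\chi(h)\,\phi^{\g}(h(s(v_1)),\dots,h(s(v_n)))$. Since a weighted average of extended-real values (with positive weights summing to $1$) bounded above by $u$ cannot consist entirely of values strictly greater than $u$, there must exist at least one $h^\star\in\supp(\chi)$ with $\phi^{\g}(h^\star(s(v_1)),\dots,h^\star(s(v_n)))\leq u$. Setting $s'(v_i):=h^\star(s(v_i))$ produces the desired assignment $s'\colon V\to C$. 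The ``in particular'' claim $\inf_C\phi^{\g}\leq\inf_D\phi^{\Delta}$ then follows immediately by applying the statement to every $u>\inf_D\phi^{\Delta}$ and letting $u$ tend to that infimum.

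The only subtle point, and the one I would be careful with, is the handling of $+\infty$-valued terms. Because $\phi^{\Delta}(s(v_1),\dots,s(v_n))\leq u<+\infty$ each summand $f_i^{\Delta}(\dots)$ is finite, so the right-hand side of each per-term inequality is finite; as a consequence, the left-hand expectation is finite, which forces the set of $h\in\supp(\chi)$ with $f_i^{\g}(h(\dots))=+\infty$ to have total $\chi$-measure zero, and hence to be empty in the support. This legitimises the termwise manipulation with extended-real arithmetic and guarantees that some $h^\star$ in the support actually realises a finite value at most $u$. This bookkeeping with infinities and with a possibly countably infinite support is the main (though routine) technical point; the averaging argument itself is standard and just a careful rerun of~\cite[Lemma~2.2]{ThapperZivny2012} in the countable-support setting.
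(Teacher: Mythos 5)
Your proposal is correct and follows essentially the same argument as the paper: apply the defining inequality of the fractional homomorphism termwise to the tuples $(s(v^i_1),\dots,s(v^i_{\ar(f_i)}))$, exchange the finite sum over terms with the (countable) sum over $\supp(\chi)$, and conclude by averaging that some $h^\star\in\supp(\chi)$ yields $\phi^{\g}(h^\star\circ s(v_1),\dots,h^\star\circ s(v_n))\leq u$. Your extra bookkeeping about $+\infty$-valued terms and the limiting argument for the ``in particular'' part are fine refinements of the same proof.
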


\begin{proof}
	Let $\phi(v_1,\ldots,v_n):=\sum_{j \in J}\gamma_j(v^j)$, where $\gamma_j \in
  \tau$ and $v^j \in V^{ar(\gamma_j)}$; and let $u\in \Q$. Let $\chi$ be a
  fractional homomorphism from $\Delta$ to $\g$ and let $s \colon V \to D$ be an
  assignment with cost $\phi^{\Delta}(s(v_1), \ldots, s(v_n)) \leq u$. Then, by the definition of fractional homomorphism, \begin{align*}&\sum_{h \in \supp(\chi)}\chi(h)\sum_{j \in J}\gamma_j^{\g}(h(s(v^j)))=\sum_{j \in J}\sum_{h \in \supp(\chi)}\chi(h)\gamma_j^{\g}(h(s(v^j)))\leq\sum_{j \in J}\gamma_j^{\delta}(s(v^j))\leq u.\end{align*}
	Therefore, there exists at least one map $h\colon D\to C$ from $\supp(\chi)$, such that $\phi^{\g}(h\circ s(v_1), \ldots, h\circ s(v_n))\leq\phi^{\Delta}(s(v_1), \ldots, s(v_n))\leq u$, that is, $h\circ s\colon V\to C$ is an assignment with cost at most $u$.
\end{proof}

\begin{lemma*}[Lemma~\ref{lemma:weightsyminput} restated]
Let $(\Delta,\g)$ be a promise valued template and let $m \in \N$. If $\omega=(\omega_I, \omega_O)$ is an $m$-ary block symmetric promise fractional polymorphism of $(\Delta,\g)$, then also $\omega'=(\omega'_I, \omega_O)$, where $\omega'_I(e^{(m)}_i)=\frac{1}{m}$ for $1\leq i \leq m$, is an $m$-ary block-symmetric promise fractional polymorphism of $(\Delta,\g)$.
\end{lemma*}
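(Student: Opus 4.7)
\textbf{Proof proposal for Lemma~\ref{lemma:weightsyminput}.} The plan is to verify directly that $\omega'=(\omega'_I,\omega_O)$ satisfies the defining inequality~(\ref{eq:promisefracpol}) of a promise fractional polymorphism for every $f\in\tau$ and every tuple of inputs $a^1,\ldots,a^m\in D^{\ar(f)}$. The block-symmetry of $\omega'$ is immediate from that of $\omega$, since only the input component has changed and the blocks $B_1,\ldots,B_k$ refer to invariance properties of the maps in $\supp(\omega_O)$. What is nontrivial is the inequality for $\omega'_I$.

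First, I would fix $f\in\tau$ and $a^1,\ldots,a^m\in D^{\ar(f)}$, and let $G\leq S_m$ denote the subgroup of block-preserving permutations, i.e., those $\sigma\in S_m$ mapping each block $B_j$ to itself. For any $\sigma\in G$ and any $g\in\supp(\omega_O)$, the block-symmetry of $g$ yields $g(a^{\sigma(1)},\ldots,a^{\sigma(m)})=g(a^1,\ldots,a^m)$. Thus applying the promise fractional polymorphism inequality for $\omega$ to the tuple $(a^{\sigma(1)},\ldots,a^{\sigma(m)})$ gives
\[
\sum_{g\in\supp(\omega_O)}\omega_O(g)f^{\g}(g(a^1,\ldots,a^m))\ \leq\ \sum_{i=1}^m\omega_I(e_i^{(m)})f^{\Delta}(a^{\sigma(i)}),
\]
for every $\sigma\in G$.

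Next I would average this inequality over all $\sigma\in G$. The left-hand side is independent of $\sigma$, so it stays the same. For the right-hand side, I would use the following orbit-counting observation: for each $i\in B_j$ and each $k\in B_j$, the number of $\sigma\in G$ with $\sigma(i)=k$ equals $|G|/|B_j|$, because $G$ acts as the full symmetric group on each block independently. Hence
\[
\frac{1}{|G|}\sum_{\sigma\in G}f^{\Delta}(a^{\sigma(i)})\ =\ \frac{1}{|B_j|}\sum_{k\in B_j}f^{\Delta}(a^k)\qquad\text{whenever }i\in B_j.
\]
Substituting this and grouping the sum over $i$ by blocks, the averaged right-hand side becomes
\[
\sum_{j=1}^k\Bigl(\sum_{i\in B_j}\omega_I(e_i^{(m)})\Bigr)\cdot\frac{1}{|B_j|}\sum_{k\in B_j}f^{\Delta}(a^k)\ =\ \sum_{j=1}^k\frac{|B_j|}{m}\cdot\frac{1}{|B_j|}\sum_{k\in B_j}f^{\Delta}(a^k)\ =\ \frac{1}{m}\sum_{i=1}^m f^{\Delta}(a^i),
\]
where the second equality uses precisely the block-symmetry condition $\sum_{i\in B_j}\omega_I(e_i^{(m)})=|B_j|/m$ assumed on $\omega_I$. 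The result is exactly the required inequality for $\omega'=(\omega'_I,\omega_O)$.

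The only real obstacle is the averaging step: one must notice that block-symmetry of the output maps lets the tuple be permuted within blocks freely, and that the block-proportionality hypothesis on $\omega_I$ is \emph{tailored} to make the averaged weights collapse to the uniform distribution $\omega'_I$. Everything else (measurability, the fact that $\omega'_I$ is a probability distribution on $\mathcal J_D^{(m)}$, and the preservation of block-symmetry) is immediate.
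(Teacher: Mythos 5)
Your proof is correct and follows essentially the same route as the paper's: you apply the promise fractional polymorphism inequality to the block-permuted tuples (using block-symmetry of the maps in $\supp(\omega_O)$ to keep the left-hand side fixed) and then average over all block-preserving permutations, invoking $\sum_{i\in B_j}\omega_I(e_i^{(m)})=\lvert B_j\rvert/m$ to collapse the weights to $1/m$. The paper sums over the permutations with factorial counting rather than averaging with the orbit count $\lvert G\rvert/\lvert B_j\rvert$, but this is only a difference of normalisation, not of substance.
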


\begin{proof}
	Let $C$ and $D$ be the domains of $\g$ and $\Delta$, respectively, and let $\tau$ be the common signature of the two valued structures. By the definition of block-symmetric promise fractional polymorphism, there exists a partition $B_1\cup \cdots \cup B_k$ of $\left [m\right]$ such that for every permutation $\pi \in S_m$ that preserves the blocks $B_1, \ldots,B_k $, every $f \in \tau$ and every $a^1,\ldots,a^m \in D^{ar(f)}$ it holds that
	\begin{align*}
	\sum_{g \in \supp(\omega_O)}\omega_O(g)f^{\g}(g(a^1,\ldots,a^m))
	\leq \sum_{i=1}^m\omega_I(e_i^{(m)})f^{\Delta}(a^{\pi(i)}).
	\end{align*}
	By summing the last inequality over all $\pi \in S_m$ that preserve all the blocks $B_1,\ldots,B_k$, i.e., over all $\pi=\pi^1\cdots\pi^k$ such that $\pi^j$ is a permutation of the elements in $B_j$ for $1 \leq i \leq k$, we obtain
	\begin{align*}
    \prod_{j =1}^{k}\lvert B_j \rvert ! \sum_{g \in \supp(\omega_O)}f^{\g}(g(a^1,\ldots,a_m)) 
  \;\leq\; 
    \sum_{j=1}^k \prod_{h \neq j} \lvert B_h \rvert ! \sum_{i \in B_j}\omega_I(e^{(m)}_i)\left((\lvert Bj\rvert-1)! \sum_{{\ell} \in B_j}f^{\Delta}(a^{\ell})\right)\\
	\;=\prod_{j =1}^{k}\lvert B_j \rvert ! \sum_{j =1}^k \frac{1}{\lvert B_j\rvert} \sum_{i \in B_j} \omega_I(e_i^{(m)}) \sum_{{\ell} \in B_j} f^{\Delta}(a^{\ell}).
	\end{align*}Since, by the definition of block-symmetric promise fractional polymorphism, $\sum_{i \in B_j} \omega_I(e_i^{(m)})=\frac{\lvert B_j\rvert}{m}$ for every $j \in \{1,\ldots, k\}$, we obtain
	\begin{align*}
	\; \sum_{g \in \supp(\omega_O)}f^{\g}(g(a^1,\ldots,a^m)) \leq 
	\frac{1}{m}\sum_{i=1}^mf^{\Delta}(a^{\ell}),
	\end{align*}
	that is, $\omega'$ is a promise fractional polymorphism of $(\Delta,\g)$.
\end{proof}

\section{The BLP Relaxation}
\label{app:blp}

A fractional polymorphism of a valued structure $\g$ is \emph{fully symmetric} if every operation in its support is a fully symmetric map. 
Given a promise valued template $(\Delta,\g)$, a promise fractional polymorphism $\omega=(\omega_I,\omega_O)$ of $(\Delta,\g)$ is \emph{fully symmetric} if every map in $\supp(\omega_O)$ is fully symmetric.

The following lemma is a corollary of Lemma~\ref{lemma:weightsyminput}.

\begin{lemma}\label{lemma:weightsyminput1}
Let $(\Delta,\g)$ be a promise valued template and let $m \in \N$. If $\omega=(\omega_I, \omega_O)$ is an $m$-ary fully symmetric promise fractional polymorphism of $(\Delta,\g)$, then also $\omega'=(\omega'_I, \omega_O)$, where $\omega'_I(e^{(m)}_i)=\frac{1}{m}$ for $1\leq i \leq m$, is an $m$-ary fully symmetric promise fractional polymorphism of $(\Delta,\g)$.
\end{lemma}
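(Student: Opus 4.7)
The plan is to deduce Lemma~\ref{lemma:weightsyminput1} directly from Lemma~\ref{lemma:weightsyminput}, which is why the paper calls it a corollary. The key observation is that a fully symmetric promise fractional polymorphism is a special case of a block-symmetric promise fractional polymorphism: take the trivial partition of $[m]$ consisting of a single block $B_1 = [m]$. Every map in $\supp(\omega_O)$ is by assumption permutation-invariant under all of $S_m$, hence permutation-invariant within $B_1$. Moreover, the block-symmetric condition on $\omega_I$, namely $\sum_{i \in B_1} \omega_I(e_i^{(m)}) = \frac{|B_1|}{m} = 1$, holds automatically because $\omega_I$ is a probability measure on $\mathcal{J}^{(m)}_D$.

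Next, I would invoke Lemma~\ref{lemma:weightsyminput} applied to $\omega$ viewed as block-symmetric with this single block. The conclusion is that $\omega' = (\omega'_I, \omega_O)$, with $\omega'_I(e^{(m)}_i) = \frac{1}{m}$ for every $i$, is an $m$-ary block-symmetric promise fractional polymorphism of $(\Delta, \g)$ with the same partition. Since the partition is still the trivial one and the output maps are unchanged, the maps in $\supp(\omega_O)$ remain fully symmetric, so $\omega'$ is itself a fully symmetric promise fractional polymorphism of $(\Delta, \g)$.

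There is essentially no obstacle here; the argument is a one-line reduction. If one prefers a self-contained proof rather than a reduction, one could instead sum the inequality~\eqref{eq:promisefracpol} over all permutations $\pi \in S_m$ applied to the input tuples $a^1, \ldots, a^m$, using that the right-hand side $f^{\Delta}(a^{\pi(i)})$ reshuffles among all tuples with total weight $\sum_{i=1}^m \omega_I(e_i^{(m)}) = 1$, and using that the left-hand side $f^{\g}(g(a^{\pi(1)}, \ldots, a^{\pi(m)}))$ is invariant under such permutations since $g$ is fully symmetric; dividing by $m!$ then yields exactly the defining inequality for $\omega'$. Either way, the result is immediate from the material already developed.
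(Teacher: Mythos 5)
Your proposal is correct and matches the paper's own treatment: the paper simply declares Lemma~\ref{lemma:weightsyminput1} to be a corollary of Lemma~\ref{lemma:weightsyminput}, and your reduction---viewing a fully symmetric promise fractional polymorphism as block-symmetric for the trivial one-block partition $B_1=[m]$, where the weight condition $\sum_{i\in B_1}\omega_I(e_i^{(m)})=1$ holds automatically---is exactly the intended argument. Your alternative direct proof by averaging over all of $S_m$ is also sound and is just the one-block specialisation of the paper's proof of Lemma~\ref{lemma:weightsyminput}.
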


In view of Lemma~\ref{lemma:weightsyminput1}, 
we will assume without loss of generality that any
$m$-ary  fully symmetric promise fractional polymorphism $\omega=(\omega_I,
\omega_O)$ is such that $\omega_I$ assign $\frac{1}{m}$ to each $m$-ary
projection on the domain of $\Delta$ and we will identify $\omega$ with $\omega_O$.

Recall the BLP relaxation from Section~\ref{subsec:blp}. 
Let $(\Delta,\g)$ be a promise valued template such that the domain of $\Delta$
is a \emph{finite} set. We may solve $\pvcsp(\Delta,\g)$ by using the following
algorithm that computes a BLP relaxation of $\Delta$. 

\vskip \baselineskip
\begin{algorithm}[H] 
	\SetAlgoNoLine
	\KwIn{\\\qquad $I:=(V,\phi,u)$, a valid instance of $\pvcsp(\Delta,\g)$} 
  \KwOut{\\\qquad\textsc{yes} if there exists an assignment $s \colon V\to \dom(\Delta)$ such that $\phi^{\Delta}(s(x_1),\ldots,s(x_{\lvert V\rvert }))\leq u$\\\qquad \textsc{no} if there is no assignment $s \colon V\to \dom(\g)$ such that $\phi^{\Gamma}(s(x_1),\ldots,s(x_{\lvert V\rvert }))\leq u$}
  \medskip
	$\blp(I, \Delta)$\;
	\eIf{$\blp(I,\Delta)\nleq u$}{output \textsc{no}\;
  }{output \textsc{yes}\;}
	
	\caption{BLP  Relaxation Algorithm for $\pvcsp(\Delta,\g)$}
	\label{figure:algblp}
\end{algorithm}

\vskip \baselineskip
Algorithm \ref{figure:algblp} runs in polynomial time in $I$, and that if it outputs \textsc{no}, then indeed the answer to $\pvcsp(\Delta,\g)$ is \textsc{no}, without further assumptions.

We now present a sufficient condition under which Algorithm \ref{figure:algblp} correctly solves $\pvcsp(\Delta,\g)$. 

\begin{theorem}\label{thm:fullysymfpol}
	Let  $(\Delta,\g)$ be a promise valued template such that  $\Delta$ has a
  \emph{finite domain}. Assume that for all $m \in \N$ there exists a  fully symmetric promise fractional polymorphisms of $(\Delta,\g)$ with arity $m$. Then Algorithm \ref{figure:algblp}  correctly solves $\pvcsp(\Delta,\g)$ (in polynomial time). 
\end{theorem}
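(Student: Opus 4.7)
The plan is to mirror the structure of the proof of Theorem~\ref{thm:main}, replacing the bimultiset-structure $\mathcal{B}^{2L+1}_{B_1,B_2}(\Delta)$ (which packages block-symmetry) with a simpler \emph{multiset-structure} $\mathcal{M}^{m}(\Delta)$ (which packages full symmetry). Concretely, for $m\in\N$, let $\mathcal{M}^{m}(\Delta)$ be the valued structure on domain $\multiset{D}{m}$ whose cost functions are
\[
f^{\mathcal{M}^{m}(\Delta)}(\alpha_1,\ldots,\alpha_k):=\frac{1}{m}\min_{\substack{t^1,\ldots,t^k\in D^{m}:\\ \{t^{\ell}\}=\alpha_\ell}}\sum_{i=1}^{m}f^{\Delta}(t^1_i,\ldots,t^k_i),
\]
where $\{t^{\ell}\}$ denotes the multiset of entries of $t^{\ell}$. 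The first step is to prove the analogue of Lemma~\ref{lemma:fhomhalfsym2}: if $(\Delta,\g)$ has an $m$-ary fully symmetric promise fractional polymorphism $\omega$, then $\mathcal{M}^{m}(\Delta)\to_f\g$. Each $g\in\supp(\omega)$ descends to a well-defined $\tilde g\colon\multiset{D}{m}\to C$ by full symmetry, and pushing forward $\omega$ along $g\mapsto\tilde g$ gives the required discrete probability measure $\chi$. The inequality~\eqref{eq:promisefracpol} applied to any choice of tuples realising the minimum in $f^{\mathcal{M}^{m}(\Delta)}$ then yields the fractional-homomorphism inequality, exactly as in Lemma~\ref{lemma:fhomhalfsym2} but with a single symmetric block.

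The second step is the completeness and soundness of Algorithm~\ref{figure:algblp}. Completeness is immediate: any assignment $s\colon V\to D$ of cost $\le u$ produces an integral feasible point of $\blp(I,\Delta)$ with objective $\le u$, so $\blp(I,\Delta)\le u$. For soundness, assume the algorithm outputs \textsc{yes}, so there is a feasible rational solution $(\lambda^\star,\mu^\star)$ to $\blp(I,\Delta)$ of value at most $u$. Choose $\ell\in\N$ such that $\ell\lambda^\star$ and $\ell\mu^\star$ are both integral, and set $m:=\ell$. Define $\nu\colon V\to\multiset{D}{m}$ by letting $\nu(x_i)$ be the multiset containing $\ell\mu^\star_{x_i}(a)$ copies of each $a\in D$; this has total size $\ell\sum_a\mu^\star_{x_i}(a)=\ell=m$ by the BLP marginal constraint. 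For each term $f_j$ of $\phi$, interpret $\ell\lambda^\star_j(t)$ as prescribing the multiplicity of $t\in D^{n_j}$ among $m$ representative tuples; the BLP marginal constraint $\sum_{t:\,t_\ell=a}\lambda^\star_j(t)=\mu^\star_{x_\ell^j}(a)$ says precisely that the column multisets of these representative tuples agree with $\nu(x_\ell^j)$. Hence, by definition of $f_j^{\mathcal{M}^{m}(\Delta)}$,
\[
f_j^{\mathcal{M}^{m}(\Delta)}(\nu(x^j_1),\ldots,\nu(x^j_{n_j}))\ \le\ \sum_{t\in D^{n_j}}\lambda^\star_j(t)\,f_j^{\Delta}(t).
\]
Summing over $j\in J$ gives $\phi^{\mathcal{M}^{m}(\Delta)}(\nu(x_1),\ldots,\nu(x_n))\le\blp(I,\Delta)\le u$. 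Applying the fractional homomorphism from $\mathcal{M}^{m}(\Delta)$ to $\g$ produced in step one, together with Proposition~\ref{prop:frachom}, yields an assignment $s'\colon V\to C$ with $\phi^{\g}(s'(v_1),\ldots,s'(v_n))\le u$, so $\inf_C\phi^{\g}\le u$ and the instance is not a \textsc{no}-instance, as required.

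The main obstacle is the first step: correctly defining $\mathcal{M}^{m}(\Delta)$ and verifying the fractional-homomorphism inequality in the presence of a possibly infinite codomain $C$, where one must ensure that the support of the pushforward measure is countable (via Proposition~\ref{prop:prob}) and that the inequality~\eqref{eq:promisefracpol} can be averaged over all permutations that preserve the (single) symmetric block to compare to the unweighted input measure guaranteed by Lemma~\ref{lemma:weightsyminput1}. Once the multiset-structure and its fractional homomorphism to $\g$ are in hand, the second step is essentially a matter of bookkeeping with the BLP marginal constraints and is strictly simpler than the corresponding argument in Theorem~\ref{thm:main}, since only one block (of size $m$) must be populated rather than a pair of blocks of sizes $L+1$ and $L$.
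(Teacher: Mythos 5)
Your proposal is correct and follows essentially the same route as the paper: your $\mathcal{M}^{m}(\Delta)$ is exactly the paper's multiset-structure $\mathcal P^{m}(\Delta)$, your first step is precisely Lemma~\ref{lemma:fhom}, and your soundness argument (scaling an optimal/feasible BLP point by $\ell$, reading off multiplicities to build $\nu$, comparing with $f_j^{\mathcal P^{m}(\Delta)}$ via the marginal constraints, and finishing with Proposition~\ref{prop:frachom}) is the paper's proof of Theorem~\ref{thm:fullysymfpol}, adapted from~\cite{ThapperZivny2012}. No substantive differences or gaps.
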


To prove Theorem~\ref{thm:fullysymfpol} we need to use a preliminary lemma and the notion of a \emph{multiset-structures}.

Let $\Delta$ be a valued $\tau$-structure with domain $D$ and  let $m \in \N$.
The \emph{multiset-structure} $\mathcal P^{m}(\Delta)$   is the valued structure
with domain $\multiset{D}{m}$ i.e., the set whose elements $\alpha$ are
multisets of elements from $D$ of size $m$. For every $k$-ary function symbol
$f\in \tau$, and $\alpha_1,\ldots,\alpha_k \in \multiset{D}{m}$ the function $f^{\mathcal P^{m}(\Delta)}$ is defined as follows
\begin{align*}f^{\mathcal P^{m}(\Delta)}(\alpha_1,\ldots, \alpha_k):=\frac{1}{m}\min_{\substack{t^1,\ldots,t^k \in D^{m}:\\ \{t^{\ell}\}=\alpha_{\ell}}}\sum_{i=1}^{m}f^{\Delta}(t^1_i,\ldots,t^k_i),\end{align*}
where $\{t^{\ell}\}$ is the multiset of the coordinates of $t^{\ell}$, i.e., the multiset $\{t^{\ell}_i \mid 1\leq i \leq m\}$.

\begin{lemma}\label{lemma:fhom}
	Let  $(\Delta,\g)$ be a promise valued template such that  $\Delta$ has a \emph{finite domain}. Let $m\in \N$ and let us assume that $(\Delta,\g)$ has a  fully symmetric promise fractional polymorphism of arity $m$. Then $\mathcal P^{m}(\Delta)$ is fractionally homomorphic to $\g$. 
\end{lemma}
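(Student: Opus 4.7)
The plan is to mimic closely the proof of Lemma~\ref{lemma:fhomhalfsym2}, but with the single-block (i.e.\ fully symmetric) case, which is actually cleaner. Let $C$ be the (possibly infinite) domain of $\g$, $D$ the finite domain of $\Delta$, $\tau$ the common signature, and let $\omega$ be the given $m$-ary fully symmetric promise fractional polymorphism of $(\Delta,\g)$. For each $g\in\supp(\omega)\subseteq C^{D^m}$ I would define $\tilde g\colon \multiset{D}{m}\to C$ by $\tilde g(\{\xi^1,\ldots,\xi^m\}):=g(\xi^1,\ldots,\xi^m)$; full symmetry of $g$ makes $\tilde g$ well defined (the value does not depend on the ordering chosen for the multiset).

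Next I would push $\omega$ forward to a discrete probability measure $\chi$ on $C^{\multiset{D}{m}}$ by setting $\chi(Y):=\sum_{g\in\supp(\omega):\,\tilde g\in Y}\omega(g)$ for every $Y\subseteq C^{\multiset{D}{m}}$. Countable additivity is inherited from $\omega$, the support $\supp(\chi)=\{\tilde g\mid g\in\supp(\omega)\}$ is countable by Proposition~\ref{prop:prob}, and $\sum_{h\in\supp(\chi)}\chi(h)=\sum_{g\in\supp(\omega)}\omega(g)=1$, so $\chi$ is indeed a discrete probability measure with non-empty support.

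It remains to verify the fractional-homomorphism inequality. Fix $f\in\tau$ of arity $k$ and multisets $\alpha_1,\ldots,\alpha_k\in\multiset{D}{m}$. For each $\ell\in\{1,\ldots,k\}$ choose any tuple $t^{\ell}\in D^m$ with $\{t^{\ell}\}=\alpha_{\ell}$; since every $g\in\supp(\omega)$ is fully symmetric, for arbitrary permutations $\pi_1,\ldots,\pi_k\in S_m$ I get
\begin{align*}
\sum_{h\in\supp(\chi)}\chi(h)f^{\g}(h(\alpha_1),\ldots,h(\alpha_k))
&=\sum_{g\in\supp(\omega)}\omega(g)f^{\g}\bigl(g(t^1_{\pi_1(1)},\ldots,t^1_{\pi_1(m)}),\ldots,g(t^k_{\pi_k(1)},\ldots,t^k_{\pi_k(m)})\bigr)\\
&\leq \frac{1}{m}\sum_{i=1}^{m}f^{\Delta}\bigl(t^1_{\pi_1(i)},\ldots,t^k_{\pi_k(i)}\bigr),
\end{align*}
where the equality uses full symmetry of the maps $g$ and the inequality uses that $\omega$ is a promise fractional polymorphism of $(\Delta,\g)$ (together with Lemma~\ref{lemma:weightsyminput1}, which lets me assume the input distribution is uniform on projections).

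Finally, since the left-hand side is independent of the tuples $t^{\ell}$ and of the permutations $\pi_{\ell}$, I can minimise the right-hand side over all tuples $t^1,\ldots,t^k\in D^m$ subject only to the multiset constraints $\{t^{\ell}\}=\alpha_{\ell}$; this minimum is exactly $f^{\mathcal P^m(\Delta)}(\alpha_1,\ldots,\alpha_k)$, yielding $\sum_{h\in\supp(\chi)}\chi(h)f^{\g}(h(\alpha_1),\ldots,h(\alpha_k))\leq f^{\mathcal P^m(\Delta)}(\alpha_1,\ldots,\alpha_k)$, as required. The only real step is recognising that full symmetry of $g$ is exactly what allows the independent re-orderings $\pi_{\ell}$ inside each argument to disappear, so that a single minimisation over representatives of each multiset suffices; everything else is routine bookkeeping entirely analogous to the two-block case already worked out in Lemma~\ref{lemma:fhomhalfsym2}.
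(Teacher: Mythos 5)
Your proposal is correct and follows essentially the same route as the paper's own proof: the same push-forward construction of $\chi$ from $\omega$ via the well-defined maps $\tilde g$, and the same key step of using full symmetry to make the value independent of the chosen orderings so that one can minimise over all tuples representing the multisets, recovering $f^{\mathcal P^m(\Delta)}$. The appeal to Lemma~\ref{lemma:weightsyminput1} to normalise the input distribution matches the convention the paper adopts as well, so there is nothing to add.
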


Lemma~\ref{lemma:fhom} is a generalisation of~\cite[Lemma~2.2]{ThapperZivny2012} to promise valued templates ($\Delta, \g)$ where $\Delta$ is a finite-domain valued structure. Furthermore, it already appeared in a similar form in~\cite[Lemma~6.9]{PLVCSPsolvbyLP}.\footnote{In~\cite{PLVCSPsolvbyLP} the assumption of having an $m$-ary fully-symmetric promise fractional polymorphism of $(\Delta,\g)$ is replaced by the hypothesis of having an $m$-ary fully-symmetric fractional polymorphism of $\g$. Another difference is that in~\cite{PLVCSPsolvbyLP} the notion of fractional polymorphism employed allows only finitely supported probability measures.} 

\begin{proof}[Proof of Lemma~\ref{lemma:fhom}]
  Let $C$ be the (possibly infinite) domain of $\g$, let $D$ be the finite domain of $\Delta$, and let $\tau$ be the common signature of $\g$ and $\Delta$.
	Let  $\omega$ be the $m$-ary  fully symmetric promise fractional polymorphism
  of $(\Delta,\g)$. For every $g \in \supp(\omega)\subseteq C^{D^m}$  we define
  $\tilde g\colon\multiset{D}{m}\to C$ by setting, for every  $\alpha=\{\xi^1,
  \ldots, \xi^m\}\in \multiset{D}{m}$, 
	\[\tilde{g}(\alpha)=g(\xi^1,\xi^2,\ldots,\xi^{m}).\] 
	Observe that $\tilde{g}$ is well defined as $g$ is  fully symmetric.
  We define the discrete probability measure $\chi$ on $C^{\multiset{D}{m}}$ as
  follows \[\chi(Y)= \sum_{\substack{g \in \supp(\omega):\\  \tilde{g} \in Y}}
  \omega(g),\qquad\text{ for every } Y \subseteq \mmultiset{D}{m}.\]
	
	Observe that $\chi$ satisfies the countable additivity property, since $\omega$ does. Furthermore, $\supp(\chi)=\{h \in C^{D^m} \mid  h= \tilde{g} \text{ for some } g \in \supp(\omega)\}=\{\tilde{g} \in C^{D^m} \mid  g \in \supp(\omega)\}$ is countable  and  it holds that
	\begin{align*}\sum_{h \in \supp(\chi)}\chi(h)= \sum_{g \in \supp(\omega)} \omega(g)=1.\end{align*}
	We claim that $\chi$ is a fractional homomorphism from ${\mathcal P^{m}(\Delta})$ to $\g$.
	Indeed, for every $f \in \tau$ and every  tuple $(\alpha_1,\ldots, \alpha_k)\in\multiset{D}{m}^k$ with $\alpha_i:=\{\xi^1_i,\ldots, \xi^m_i\}$ and $k:=\ar(f)$, it holds that 
	{ \begin{align}\nonumber
		& \sum_{h \in C^{\multiset{D}{m}}}\chi(h)f^{\g}(h(\alpha_1)\ldots,h(\alpha_k))\\
   \nonumber= & \sum_{g \in \supp(\omega)}\omega(g) f^{\g}(g(\xi_1^1,\ldots,\xi_1^{m}),\ldots,g(\xi_k^1,\ldots,\xi_k^{m}))\\\label{eq:frhom1}
		=& \sum_{g \in \supp(\omega)}\omega(g)f^{\g}(g(\xi_1^{\pi_1(1)},\ldots,\xi_1^{\pi_1({m})}),\ldots,g(\xi_k^{\pi_k(1)},\ldots,\xi_k^{\pi_k({m})}))\\\label{eq:frhom2}
		\leq &\; \frac{1}{m}\sum_{i=1}^{m} f^{\Delta} (\xi^{\pi_1(i)}_1, \ldots, \xi^{\pi_k(i)}_k)
		\end{align}}
	for every $\pi_1, \ldots,\pi_k \in S_{m}$. 
	Equality~(\ref{eq:frhom1}) holds because the maps $g \in \supp(\omega)$ are  fully symmetric and Inequality~(\ref{eq:frhom2}) holds because $\omega$ is a promise fractional polymorphism of $(\Delta,\g)$.
	Then, in particular, we obtain  \begin{align*}
	&\sum_{h \in C^{\multiset{D}{m}}}\chi(h)f^{\g}(h(\alpha_1,\ldots,\alpha_k))\leq   \frac{1}{m}\min_{\substack{t^1,\ldots,t^k \in D^{m}:\\ \{t^{\ell}\}=\alpha_{\ell}}}\sum_{i=1}^{m}f^{\Delta} (t_{i}^1, \ldots t_{i}^k)\\=& f^{\mathcal{P}^{m}(\Delta)}(\alpha_1,\ldots,\alpha_k).\qedhere
\end{align*}\end{proof}	

With Lemma~\ref{lemma:fhom}, the proof of Theorem~\ref{thm:fullysymfpol} follows
the same argument as in the finite-domain
non-promise case~\cite[Theorem~3.2]{ThapperZivny2012}. We include the proof here for
completeness. 

\begin{proof}[Proof of Theorem~\ref{thm:fullysymfpol}]
	Let $C$ be the (possibly infinite) domain of $\g$ and let $D$ be the finite domain of $\Delta$. Let $\tau$ be the common signature of $\Delta$ and $\g$.	
	Let $I$ be an instance of $\pvcsp(\Delta,\g)$ with variables $V=\{x_1, \ldots,x_n\}$, objective function $\phi(x_1,\ldots,x_n)=\sum_{j \in J}\gamma_j(x^j)$ where $J$ is a finite set of indices, $\gamma_j \in \g$, and $x^j \in V^{\ar(j)}$, and threshold $u$. 
 Note that if $\blp(I,\Delta) \nleq u$ (this also includes the case $\blp(I,\Delta)=+\infty$, i.e., the case that $I$ is not feasible) then $\min_{D}\phi^{\Delta}\nleq u$.  We may therefore safely output \textsc{no}.
	Otherwise, $\blp(I,\Delta)\leq u$ implies that 
  $\inf_{C}\phi^{\g}\leq \blp(I,\Delta)$. The proof of this last
  statement is contained in the first part of the proof
  of~\cite[Theorem~3.2]{ThapperZivny2012}; we include it here for completeness. 
	Let $(\lambda^\star, \mu^\star)$ be an optimal solution to $\blp(I, \Delta)$
  and let $M$ be a positive integer such that $M\cdot \lambda^\star$, and
  $M\cdot \mu^\star$ are both integral. Let $\nu\colon V \to \multiset{D}{M}$ be defined by mapping the variable $x_i$ to the multiset in which the elements are distributed accordingly to $\mu_{x_i}^\star$, i.e., for every $a \in D$ the number of occurrences of $a$ in $\nu(x_i)$ is equal to $M\mu_{x_i}^\star(a)$. Let $f_j$ be a $k$-ary function symbol in $\tau$ that occurs in a term $f_j(x^j)$ of the objective function $\phi$. 
	Now we write \[M\cdot \sum_{t \in D^k}\lambda^\star_j(t)f^{\Delta}_j(t)=f^{\Delta}_j(\alpha^1)+\cdots+f^{\Delta}_j(\alpha^M)\text{,}\] where the $\alpha^i \in D^k$ are such that $\lambda^\star_j(t)$-fractions are equal to $t$. Let us define $\alpha_{\ell}':=(\alpha^1_i,\ldots,\alpha^M_i)$ for $1\leq i \leq k$. We get 
	\begin{align*}&\sum_{ t \in D^{k}}\lambda^\star_j(t)f_j^{\Delta}(t)=\frac{1}{M}\sum_{i=1}^{M}f_j^{\Delta}(\alpha^i)=\frac{1}{M}\sum_{i=1}^{M}f_j^{\Delta}(\alpha_1^i, \ldots, \alpha_k^i)\\ \geq&  \frac{1}{M}\min_{\substack{t^1,\ldots,t^k \in D^M:\\ \{t^{\ell}\}=\{\alpha_{\ell}'\}}}\sum_{i=1}^{M}f_j^{\Delta}(t^1_i,\ldots,t^k_i)=f_j^{\mathcal P^M(\Delta)}(\alpha'_1,\ldots,\alpha'_k)\\=&f_j^{\mathcal P^M(\Delta)}(\nu(x))\text{, }\end{align*}
	where the last equality follows as the number of $a$'s in $\alpha_i'$ is \[M \cdot \sum_{ t \in D^{k}: t^i=a}\lambda^\star_j(t)=M\cdot \mu^\star_{x_i}(a).\] Then \begin{align*}
	\blp(I, \Delta)=&\sum_{ j \in J}\sum_{t \in
  D^{\ar(f_j)}}\lambda^\star_j(t)f_j^{\Delta}(t) = \sum_{ j \in J}\left (\sum_{t
  \in D^{\ar(f_j)}}\lambda^\star_j(t)f_j^{\Delta}(t) \right)\\ \geq&   \sum_{ j
  \in J}\left (f_j^{\mathcal P^M(\Delta)}(\nu(x))\right) \geq \min_{
  \multiset{D}{m}}\phi^{\mathcal{P}^{M}(\Delta)}.  \end{align*}
	
	Since we assumed $\blp(I, \Delta)\leq u$, we obtain
  $\min_{\multiset{D}{m}}\phi^{\mathcal{P}^{M}(\Delta)}\leq u$.
  Moreover, since $\g$ has fully symmetric fractional polymorphisms of all
  arities, Lemma \ref{lemma:fhom} implies the existence of a fractional
  homomorphism $\omega$ from $\mathcal P^M(\Delta)$ to $\g$. From Proposition
  \ref{prop:frachom} it follows that \[\inf_{C}\phi^{\g}\leq
  \min_{\multiset{D}{m}}\phi^{\mathcal{P}^{M}(\Delta)}\leq \blp(I, \Delta)\leq u.\] 
\end{proof}

\end{document}